\documentclass{article}

\usepackage{arxiv}

\usepackage[utf8]{inputenc} 
\usepackage[T1]{fontenc}    
\usepackage{hyperref}       
\usepackage{url}            
\usepackage{booktabs}       
\usepackage{amsfonts}       
\usepackage{nicefrac}       
\usepackage{microtype}      
\usepackage{lipsum}

{}
{}
{}

\usepackage{amsmath}
\usepackage{amsfonts}
\usepackage{amssymb}
\usepackage{graphicx}
\usepackage{float}
\usepackage{caption}

\usepackage{bm}
\usepackage{breqn}

\usepackage{booktabs}
\usepackage{hhline}

\usepackage{algorithm}
\usepackage{algpseudocode}
\usepackage{multirow}

\newcommand\Algphase[1]{%
\vspace*{.5\baselineskip}
\Statex\hspace*{-\algorithmicindent}\textbf{#1}%
\vspace*{.5 \baselineskip}
}

\usepackage{array}

\usepackage{adjustbox}
\usepackage{makecell}

\usepackage{balance}
\usepackage{flushend}

\usepackage[caption=false,font=footnotesize]{subfig}

\usepackage{stfloats}

\usepackage{url}

\usepackage{amsthm}

\newtheorem{definition}{Definition}

\newtheorem{lem}{Lemma}

\graphicspath{{D:/Dropbox/Study/trial/measurement_mtx_extend/iet_revw_2/}}

\title{Entropy-Based Learning of Sensing Matrices}

\author{
  Gayatri Parthasarathy\thanks{This paper is a preprint of a paper submitted to IET Signal Processing journal. If accepted, the
copy of record will be available at the IET Digital Library} \\
  Electronics and Communication Engineering Department\\
  National Institute of Technology, Calicut\\
  Kerala-673601, India \\
  \texttt{gayatri\_p120094ec@nitc.ac.in} \\
   \And
 G.~ Abhilash \\
  Electronics and Communication Engineering Department\\
  National Institute of Technology, Calicut\\
  Kerala-673601, India \\
  \texttt{abhilash@nitc.ac.in} \\
}

\begin{document}
\maketitle
\begin{abstract}
This paper proposes a learning method to construct an efficient sensing (measurement) matrix, having orthogonal rows, for compressed sensing of a class of signals. The learning scheme identifies the sensing matrix by maximizing the entropy of measurement vectors. The bounds on the entropy of the measurement vector necessary for the unique recovery of a signal are also proposed. A comparison of the performance of the designed sensing matrix and the sensing matrices constructed using other existing methods is also presented. The simulation results on the recovery of synthetic, speech, and image signals, compressively sensed using the sensing matrix identified, shows an improvement in the accuracy of recovery. The reconstruction quality is better, using less number of measurements, than those measured using sensing matrices identified by other methods. 
\end{abstract}

\section{Introduction}

Compressed sensing (CS) aims at capturing signals, sparse in some domain, in a reduced set of measurements. Consider a signal $\bm{x}\in \mathbb{R}^N$, having representation $\bm{c}\in \mathbb{R}^N$ relative to a basis $\bm{\Psi}=\lbrace \bm{\psi_i}\rbrace_{i=1}^N$, such that $\bm{x}=\bm{\Psi c}$. The representation of the signal is $K$- sparse if $\Vert \bm{c}\Vert_0\leq K$ (which is the number of non-zero coefficients in the representation) for $K\ll N$. For the signal $\bm{x}$, the compressive sensing problem is given as
\begin{equation}
\bm{y}=\bm{\Phi x}=\bm{\Phi \Psi c}= \bm{Ac},
\label{measure}
\end{equation}
where the sensing matrix $\bm{\Phi}\in \mathbb{R}^{M\times N}$, $K<M\ll N$, and $\bm{A}=\bm{\Phi \Psi}$. 

The $\bm{\Phi}$ matrix should be constructed such that the matrix $\bm{A}$ captures maximum information from the sparse coefficients $\bm{c}$ for a known sparsifying basis $\bm{\Psi}$. The classical choice of the sensing matrix is a random matrix which simplifies the theoretical analysis \cite{1}\cite{2}. The scheme for realizing the random projections is the random demodulator proposed in \cite{3} and \cite{4}. 
In this paper, we address the problem of identifying an efficient sensing matrix.

In \cite{5}, Elad introduced a structured sensing matrix along with a method to construct it by reducing the mutual coherence $\mu (\bm{A})$ of the columns of the matrix $\bm{A}$. Subsequently, many techniques were proposed to construct the sensing matrix by reducing $\mu (\bm{A})$ \cite{7} -\cite{11}. In \cite{6} and \cite{8}, the sensing matrices were constructed by applying multidimensional scaling (MDS) on a sparsifying dictionary $\bm{\Psi}$. The methods employed in error control coding theory have also been used for the construction of sensing matrices \cite{12} -\cite{14}. Carson \textit{et al} \cite{24} proposed the construction of a projection matrix by maximizing the statistical Renyi entropy of the projections. It requires the knowledge of the statistical probability distribution of the signals. Baldassarre \textit{et al} \cite{15} proposed a method for constructing a measurement matrix for signals with structured sparsity by learning from the set of signals. The technique proposed in \cite{15} is a sub-sampling method which aims at capturing maximum energy of the structured sparse signals.

In  \cite{25} and \cite{26}, Hegde \textit{et al} proposed a learning method called NuMax to construct a sensing matrix such as to satisfy the restricted isometry property (RIP)\cite{30}.
\begin{equation}
(1-\delta)\Vert \bm{c}\Vert_2^2\leq \Vert \bm{A c}\Vert_2^2\leq (1+\delta)\Vert \bm{c}\Vert_2^2, \quad \text{where  } 0< \delta < 1.
\label{RIP}
\end{equation}

In this paper, we propose a learning method to construct an efficient measurement matrix (Section \ref{algo}), from a set of training signals belonging to a class of signals, by maximizing the Shannon entropy of the measurement vectors constrained to achieving RIP. The measurement matrix constructed using the proposed method has orthogonal rows. The probability distribution of the signals and the Shannon entropy used in this article (Section \ref{motivation}) are different from the statistical probability and entropy used in \cite{24}, respectively. A relation between the entropy $H(\bm{y})$ of the measurement vector $\bm{y}$, the number of measurements $M$, and the entropy $H(\bm{c})$ of the vector of representation coefficients of the signal (Section \ref{motivation}) is also proposed. The simulation results (Section \ref{results}) with synthetic and speech signals suggest that the method proposed is capable of constructing measurement matrices that give improved recovery from a reduced set of measurements. An advantage of the proposed method is that it works even for signals that do not have structured sparsity (Section \ref{synthetic}).  
\section{Motivation and Problem Formulation}
\label{motivation}
To motivate the entropy based measurement matrix design, we use the definitions of the probability distribution of the representation of a signal and the Shannon entropy of the representation of the signal as proposed in \cite{16} - \cite{35}:
\begin{definition} Let $\bm{\Psi}=\{\bm{\psi}_i\}_{i=1}^N$ be a basis of an $N$-dimensional space. Let $\bm{x}$ be a signal belonging to a class of signals $\bm{X}$ such that $\bm{x}= \sum_{i=1}^N{c_i\bm{\psi}_i}$,  where $\bm{c}=[c_1, c_2,...c_N]^T$ is the vector of representation coefficients of $\bm{x}$ relative to $\bm{\Psi}$. The probability distribution of the representation of the signal $\bm{x}$ relative to $\bm{\Psi}$ is $\bm{p}=\lbrace p_i \rbrace_{i=1}^N$, where $p_i=|c_i|^2/\Vert \bm{c} \Vert_2^2$.
\end{definition} 
The entropy of representation conditioned to $\bm{\Psi}$ is the amount of information left in $\bm{x}$ when the representation basis $\bm{\Psi}$ is known \cite{28}. This conditional entropy $H(\bm{x}\mid \bm{\Psi})$ is defined as follows.
\begin{definition}
The Shannon entropy of the signal $\bm{x}$ with respect to basis $\bm{\Psi}$ is given by 
\begin{equation}
H(\bm{x}\mid\bm{\Psi})=-\sum_{i=1}^N p_i \ln(p_i)=-\sum_{i=1}^N \frac{|c_i|^2}{\Vert \bm{c} \Vert_2^2}\ln\left(\frac{|c_i|^2}{\Vert \bm{c} \Vert_2^2}\right).
\end{equation}
\label{ent_def}
\end{definition}
\textit{Remarks}: By definition, $0\times \ln(0)=0$ \cite{29}. The terms Shannon entropy and entropy are used interchangeably in this article.

Since the entropy of representation with respect to the basis $\bm{\Psi}$ depends only on the probability distribution of representation $\bm{p}$, which depends on the coefficients of representation $\bm{c}$, we can conclude $H(\bm{x}\mid\bm{\Psi})=H(\bm{c})$, with $H(\bm{c})$ being the \textit{entropy of the representation} of $\bm{x}$ relative to $\bm{\Psi}$. 
The entropy of representation of the reduced set of measurements $\bm{y}$ with respect to some basis of the range space of $\bm{A}$ is termed as the \textit{entropy of the set of measurements} or \textit{entropy of the measurement vector}. For simplicity, we consider the representation basis of the measurements to be the standard ordered basis of an $M$-dimensional space. Hence the entropy of the measurement vector is 
\begin{equation}
H(\bm{y})=-\sum_{i=1}^M \frac{|y_i|^2}{\Vert \bm{y} \Vert_2^2}\ln\left(\frac{|y_i|^2}{\Vert \bm{y} \Vert_2^2}\right),
\label{measur_ent}
\end{equation}
 where $\bm{y}$ is given by (\ref{measure}) and $y_i$'s are the representation coefficients relative to the standard ordered basis of the reduced space of measurements which is the range space of $\bm{A}$.  

From the definition of the entropy of representation $H(\bm{c})$, it can be seen that the more concentrated the probability distribution of the representation, the lower is the entropy. Since the probability of representation is directly proportional to the magnitude of the representation coefficient, it can be argued that the lower the entropy $H(\bm{c})$, the higher is the compressibility of the representation \cite{31}. To quantify the compressibility of the signal with respect to a basis, we introduce theoretical dimension as the entropy-based measure of sparsity. Given the entropy of representation $H(\bm{c})$, the theoretical dimension $n_{th,c}^{\Psi}$ of the representation of a signal in the basis $\bm{\Psi}$ is given by \cite{17} \cite{31} \cite{18}-\cite{35}
\begin{equation}
n_{th,c}^{\Psi}=\lceil \exp(H(\bm{c}))\rceil.
\label{th_dim}
\end{equation}
The theoretical dimension takes values $1 \leq n_{th,c}^{\Psi} \leq N$, where $N$ is the total number of basis vectors in the representation basis. As entropy decreases, the theoretical dimension also decreases leading to a compressible representation of the signal. The theoretical dimension specifies the number of basis vectors required to represent a compressible signal without unduly degrading the signal quality. Experimental results of \cite{31} show that the theoretical dimension gives the number of basis vectors required to capture at least $90\%$ of the signal energy. The various criteria that a measure of sparsity must satisfy are discussed in \cite{41}. Meena and Abhilash \cite{35} discuss the various criteria, mentioned in \cite{41}, that the theoretical dimension based measure of sparsity satisfies. 


 For a compressible signal, the $l_0$-sparsity is achieved by restricting the representation to those coefficients which carry the maximum energy of the signal, such that the signal quality is not degraded unduly. Hence, the theoretical dimension of the signal representation $\bm{c}$ can be approximated as the $l_0$- sparsity $K$ of the representation of the signal, that is $K\approx n_{th,c}^{\Psi}$. According to the definition of theoretical dimension, the lower the entropy, the higher is the sparsity or the lower is the value of $K$. The theoretical dimension of $\bm{y}$ with respect to the standard ordered basis of an $M$ dimensional space  is $n_{th,y}^I=\lceil \exp(H(\bm{y}))\rceil=M_{eff}$. The quantity $M_{eff}$ is the effective number of coefficients, in the representation of the measurement vector $\bm{y}$ relative to the standard ordered basis, that captures at least $90\%$ of the signal energy. The remaining $M-M_{eff}$ number of coefficients in the representation of $\bm{y}$ carries insignificant amount of information.
 
According to the theory of CS, for any measurement vector $\bm{y} \in R^M$ there exists at most one $K$-sparse signal $\bm{c} \in R^N$, such that $\bm{y}=\bm{Ac}$, if and only if $\text{spark}(\bm{A}) > 2K$ \cite{30}, where $\text{spark}(\bm{A})$ is the smallest number of columns of $\bm{A}$ that are linearly dependent. The value of spark($\bm{A}$) lies in the range $[2,M+1]$, where $M$ is the dimension of the range space of $\bm{A}$ \cite{30}. Hence the requirement $M\geq 2K$ holds good. Since $M_{eff}$ gives the effective number of measurements, ideally, the necessary condition for unique recovery would be $M_{eff} \geq 2K$ (See Appendix A). Since we use the entropy of the measurement vector $\bm{y}$ to construct the measurement matrix, we identify the bounds on the entropy of the measurement vector, for a fixed $M$. These bounds are necessary, but not sufficient, for achieving unambiguous measurements for unique recovery. These bounds are presented in Lemma 1.  
 
\begin{lem}
If $\bm{y}$ is a measurement vector consisting of a set of $M\geq 2K$ measurements of a non-zero compressible signal, then the entropy $H(\bm{y})$ of the measurement vector, relative to the standard ordered basis of an $M$-dimensional space, satisfies
\begin{align}
H(\bm{c})\leq \ln(K)<H(\bm{y})\leq \ln(M)
\label{lem1} .
\end{align}
where $H(\bm{c})$ is the entropy of the representation of a compressible signal relative to a sparsifying basis and the approximate $l_0$ sparsity of the signal is $K \approx \lceil \exp(H(\bm{c}))\rceil$.
\end{lem}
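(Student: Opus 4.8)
The plan is to prove the chain \eqref{lem1} one inequality at a time, since the three bounds come from unrelated facts. For the leftmost inequality $H(\bm c)\le\ln(K)$ I would invoke only the definition of the theoretical dimension \eqref{th_dim}: because $K\approx n_{th,c}^{\Psi}=\lceil\exp(H(\bm c))\rceil\ge\exp(H(\bm c))$, monotonicity of $\ln(\cdot)$ gives $\ln(K)\ge H(\bm c)$ at once, with equality exactly when $\exp(H(\bm c))$ happens to be an integer. No other property of the signal is needed here.

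For the rightmost inequality $H(\bm y)\le\ln(M)$ the point is simply that $\bm y\in\mathbb{R}^M$, so the probability vector $\{|y_i|^2/\|\bm y\|_2^2\}_{i=1}^{M}$ appearing in \eqref{measur_ent} is supported on at most $M$ indices. It is a legitimate probability vector because $\bm y\neq\bm 0$ --- the signal is non-zero and, under the RIP lower bound in \eqref{RIP}, $\|\bm{Ac}\|_2>0$. Applying Jensen's inequality to the concave function $\ln(\cdot)$ yields $H(\bm y)=\sum_{i:\,y_i\neq 0}p_i\ln(1/p_i)\le\ln\big(\sum_{i:\,y_i\neq 0}1\big)\le\ln(M)$, with equality precisely for an equal-energy measurement vector. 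This is the classical maximum-entropy bound and needs no new machinery.

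The middle, strict inequality $\ln(K)<H(\bm y)$ is the heart of the lemma and the only place the hypothesis $M\ge 2K$ enters, through Appendix A. I would route the argument through the effective number of measurements: by definition $M_{eff}=n_{th,y}^{I}=\lceil\exp(H(\bm y))\rceil$, and the necessary condition for unique recovery established in Appendix A gives $M_{eff}\ge 2K$. Hence $\lceil\exp(H(\bm y))\rceil\ge 2K$, which forces $\exp(H(\bm y))>2K-1$; since a non-zero compressible signal has $K\ge 1$, we get $2K-1\ge K$ and therefore $\exp(H(\bm y))>K$, i.e.\ $H(\bm y)>\ln(K)$. Strictness survives even in the extreme case $K=1$ (then $2K-1=K=1$, yet $\exp(H(\bm y))>1$ is still strict). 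Chaining the three pieces gives $H(\bm c)\le\ln(K)<H(\bm y)\le\ln(M)$.

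The main obstacle is confined to this last step. It rests on (i) the Appendix A assertion that $M_{eff}\ge 2K$ is a genuine necessary condition, and (ii) treating $K\approx n_{th,c}^{\Psi}$ and $M_{eff}=n_{th,y}^{I}$ as exact identities so that the ceiling bookkeeping --- in particular the passage from $\lceil t\rceil\ge 2K$ to $t>2K-1$ --- is airtight. If one prefers to avoid citing Appendix A, a self-contained substitute is to argue that a measurement vector able to disambiguate all $K$-sparse inputs cannot have fewer than $K$ significant coordinates, so $M_{eff}>K$, which is already equivalent to $H(\bm y)>\ln(K)$; but I expect the cleanest presentation is to quote the Appendix. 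The log-monotonicity and Jensen steps are routine.
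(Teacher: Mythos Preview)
Your proposal is correct and follows essentially the same route as the paper: the three-part split, the use of the theoretical-dimension identity $K\approx\lceil\exp(H(\bm c))\rceil$ for the leftmost bound, the maximum-entropy bound for the rightmost, and the Appendix~A condition $M_{eff}\ge 2K$ for the strict middle inequality. The only cosmetic difference is that the paper argues the middle bound by contradiction (assume $H(\bm y)\le\ln(K)$ and deduce $M_{eff}\le K+\epsilon$, violating $M_{eff}\ge 2K$), whereas you run the contrapositive directly via $\lceil\exp(H(\bm y))\rceil\ge 2K\Rightarrow \exp(H(\bm y))>2K-1\ge K$; your ceiling bookkeeping is in fact a little tighter, and the paper additionally appends a remark that the equality $H(\bm y)=\ln(M)$ does not compromise uniqueness of the measurements, which you may wish to include for completeness.
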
 
\begin{proof}
If $H(\bm{y})\leq \ln(K)$, then $\exp(H(\bm{y})) \leq K$. We know that $M_{eff}=\lceil \exp(H(\bm{y}))\rceil$. Hence $M_{eff}=\exp(H(\bm{y}))+\epsilon$, with $0\leq \epsilon <1$. Thus,
\begin{align}
\exp(H(\bm{y})) \leq K \Rightarrow M_{eff}\leq K+\epsilon.
\end{align}
Therefore, the case $H(\bm{y})\leq \ln(K)$ violates the requirement $M_{eff} \geq 2K$. Hence by contradiction, $\ln(K)<H(\bm{y})$. Since by (\ref{th_dim}), $H(\bm{c})$ is at the most $\ln(K)$, the lowest bound in (\ref{lem1}) is true.    

Given the number of measurements $M$, the entropy $H(\bm{y})$ of the measurement $\bm{y}$ is at the most $\ln(M)$ (by (\ref{measur_ent})); hence the upper bound in (\ref{lem1}) also holds. Consider two signals $\bm{x_1}$ and $\bm{x_2}$ $(\bm{x_1}\neq \bm{x_2})$, with measurements $\bm{y_1}$ and $\bm{y_2}$ such that $H(\bm{y_1})=H(\bm{y_2})=\ln(M)$, then by Definitions 1 and 2, the probability distribution of the $j$-th measurement vector $\bm{y_j}$ with respect to the standard ordered basis of a vector space of dimension $M$ is $p_{ij}=\frac{\mid y_{ij}\mid ^2}{\Vert \bm{y_j}\Vert_2^2}=1/M$, for $i=1,2...M$. The equality of the probability distribution does not imply equality of the measurement vectors. That is, $\frac{\mid y_{i1}\mid ^2}{\Vert \bm{y_1}\Vert_2^2}=\frac{\mid y_{i2}\mid ^2}{\Vert \bm{y_2}\Vert_2^2}$ does not imply $y_{i1}=y_{i2}$. Hence the equality $H(\bm{y})=\ln(M)$ does not affect the uniqueness of the measurements.
\end{proof}

Let $\mathcal{N}(\bm{A})$ represent the null space of the matrix $\bm{A}$. If $\bm{c}\in \mathcal{N}(\bm{A})$ (where $\bm{c}$ is the $K$-sparse approximation of the representation of a non-zero compressible signal), then the vector of the reduced set of measurements is $\bm{y}=\bm{Ac}=\bm{\theta}$, where $\bm{\theta}\in \mathbb{R}^M$ is the zero vector. Since the probability distribution of the measurements is defined as $p_i= \left\lbrace \frac{\vert y_i\vert^2}{\Vert \bm{y}\Vert_2^2}\right\rbrace_{i=1}^M$, the entropy $H(\bm{y})$ of $\bm{y}$ is undefined for $\bm{c}\in \mathcal{N}(\bm{A})$, as $p_i$'s are undefined. Hence, $\bm{c}\in \mathcal{N}(\bm{A})$ is undesirable. The theory of CS also discusses that the measurement operator $\bm{A}$ can uniquely represent all non-zero $K$-sparse signals if and only if no non-zero $2K$-sparse signal lies in $\mathcal{N}(\bm{A})$ \cite{30}. The following Lemma shows that maximization of the entropy $H(\bm{y})$ of measurements implies that no non-zero $K$-sparse signal lies in $\mathcal{N}(\bm{A})$. Ideally, if no non-zero $K$-sparse signal falls in $\mathcal{N}(\bm{A})$, then no non-zero $2K$-sparse signal will belong to $\mathcal{N}(\bm{A})$.   

\begin{lem}
If $\bm{A}$ is a measurement matrix having rank $M$ and $\bm{y}\in \mathbb{R}^M$ is a vector of measurements of the $K$-sparse approximation of a non-zero compressible signal with $M\geq 2K$, then the maximization of the entropy $H(\bm{y})$ of $\bm{y}$ implies that no $K$-sparse signal falls in the null space $\mathcal{N}(\bm{A})$ of $\bm{A}$. 
\end{lem}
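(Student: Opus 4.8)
The plan is to argue by contradiction, exploiting the fact that a $K$-sparse vector lying in $\mathcal{N}(\bm{A})$ is mapped to the zero measurement vector, for which $H(\bm{y})$ is not merely small but \emph{undefined}, and therefore cannot be the value produced by an entropy-maximizing design. Concretely, suppose that some nonzero $K$-sparse vector $\bm{c}$ (the $K$-sparse approximation of a nonzero compressible signal $\bm{x}=\bm{\Psi c}$) lies in $\mathcal{N}(\bm{A})$. Then the corresponding measurement is $\bm{y}=\bm{Ac}=\bm{\theta}$, so every ratio $|y_i|^2/\Vert\bm{y}\Vert_2^2$ appearing in (\ref{measur_ent}) is of the indeterminate form $0/0$; hence the probability distribution $\{p_i\}_{i=1}^M$ of the measurement vector, and with it $H(\bm{y})$, is undefined for this signal.

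Next I would bring in Lemma 1. For a nonzero compressible signal whose $K$-sparse approximation is sensed with $M\ge 2K$ rows, whenever $H(\bm{y})$ is defined it obeys $\ln(K)<H(\bm{y})\le\ln(M)$; in particular $H(\bm{y})>0$ (since $K\ge 1$), which forces $\bm{y}$ to have at least two nonzero coordinates and hence $\bm{y}\neq\bm{\theta}$. A measurement matrix that maximizes $H(\bm{y})$ therefore drives it toward its ceiling $\ln(M)$ — equivalently $M_{eff}\to M$ — and in any case keeps it strictly positive; it cannot realise the degenerate situation $\bm{y}=\bm{\theta}$ (for which $H(\bm{y})$ is undefined and $M_{eff}=0$, the exact opposite of maximal). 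This contradicts $\bm{c}\in\mathcal{N}(\bm{A})$, so no nonzero $K$-sparse vector can lie in $\mathcal{N}(\bm{A})$. The parenthetical closing remark then follows because any nonzero $2K$-sparse vector in $\mathcal{N}(\bm{A})$ is a difference of two $K$-sparse vectors yielding identical measurements, which the same maximization is expected to rule out.

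The step I expect to be the main obstacle is pinning down exactly what ``maximization of $H(\bm{y})$'' quantifies over, and promoting the conclusion from the particular sensed signal to \emph{every} $K$-sparse vector. I would address the latter by observing that an arbitrary nonzero $K$-sparse $\bm{c}$ is itself the $K$-sparse approximation of the compressible signal $\bm{x}=\bm{\Psi c}$, so the argument above applies to it verbatim once the design objective is taken to (approximately) maximize the measurement entropy uniformly over the signal class; a density argument over the training set gives the same conclusion. I would also make explicit that the hypothesis $\mathrm{rank}(\bm{A})=M$ is precisely what makes the bound $H(\bm{y})\le\ln(M)$ non-vacuous, so that ``maximization'' is a meaningful target and the gap between the maximized value and the undefined value at $\bm{y}=\bm{\theta}$ is genuine.
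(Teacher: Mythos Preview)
Your proposal is correct and follows essentially the same route as the paper: both argue that a $K$-sparse $\bm{c}\in\mathcal{N}(\bm{A})$ forces $\bm{y}=\bm{\theta}$, making $H(\bm{y})$ undefined, and then invoke Lemma~1 to conclude that entropy maximization precludes this degenerate outcome. The only notable difference is in how the hypothesis $\mathrm{rank}(\bm{A})=M$ is deployed---the paper uses it to bound $\mathrm{spark}(\bm{A})\in[2,M+1]$ and assert uniqueness of the $K$-sparse preimage, whereas your justification (that it makes $H(\bm{y})\le\ln(M)$ non-vacuous) is not quite right, since that upper bound holds for any $\bm{y}\in\mathbb{R}^M$ regardless of rank.
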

\begin{proof}
Since the rank of $\bm{A}$ is $M$, $\text{spark}(\bm{A})\in [2,M+1]$. Since $M\geq 2K$, $\text{spark}(\bm{A})>2K$, implying the existence of at most one $K$-sparse signal $\bm{c}\in \mathbb{R}^N$ such that $\bm{y}=\bm{Ac}$.

Let $\bm{c}$ represent the vector of sparse representation coefficients of a non-zero compressible signal approximated to be $K$-sparse. If $\bm{c}\in \mathcal{N}(\bm{A})$, then the vector of the reduced set of measurements is $\bm{y}=\bm{Ac}=\bm{\theta}$. If $\Vert \bm{y}\Vert_0=1$, then $\bm{c}\not\in \mathcal{N}(\bm{A})$ and $H(\bm{y})=0$. But, by Lemma 1, for $K=1$, $0< H(\bm{y})\leq \ln(M)$. Hence, $H(\bm{y})=0$ indicates that the measurements are incomplete. Therefore, to capture maximum information of the signal into $M$ measurements, the entropy $H(\bm{y})$ of $\bm{y}$ should be maximized. An entropy maximized non-zero measurement $\bm{y}$ implies that $\bm{c}\not\in \mathcal{N}(\bm{A})$.  
%
\end{proof}
Lemma 2 maintains that maximization of the entropy $H(\bm{y})$ guarantees the null space property of $\bm{A}$. However, it does not claim that the maximization of entropy $H(\bm{y})$ would make the set of measurements complete. But, for unique recovery the set of measurements should capture maximum information contained in the signal. In particular, if the number of measurements is as small as possible, then the compression gain is the highest. Lemma 3 establishes how maximization of the entropy $H(\bm{y})$ enables unique recovery with the least possible number of measurements.
\begin{lem}\label{lemma3}
The maximization of the entropy $H(\bm{y})$ of the set of measurements of a compressible signal leads to unique recovery with a reduced number of measurements $M$ close to $2K$.
\end{lem}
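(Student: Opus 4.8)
The plan is to chain the necessary condition $M_{eff}\geq 2K$ for unambiguous measurements (Appendix A) together with the bound $H(\bm{y})\leq \ln(M)$ from Lemma 1 and the null-space guarantee of Lemma 2, and then read off how small $M$ can be made once $H(\bm{y})$ is pushed to its maximum. First I would recall that $M_{eff}=\lceil \exp(H(\bm{y}))\rceil$ and that, by Lemma 1, $\ln(K)<H(\bm{y})\leq \ln(M)$, the upper value being attained exactly when the probability distribution $\{|y_i|^2/\Vert\bm{y}\Vert_2^2\}_{i=1}^M$ is uniform. Hence maximizing $H(\bm{y})$ drives it towards $\ln(M)$, so that $M_{eff}=\lceil \exp(H(\bm{y}))\rceil \to \lceil \exp(\ln M)\rceil = M$; that is, at the maximum every one of the $M$ measurements carries a significant share of the signal energy and none is wasted.

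Next I would substitute this into the necessary condition. If $H(\bm{y})=\ln(M)$, then $M_{eff}=M$ and the condition $M_{eff}\geq 2K$ reads simply $M\geq 2K$, so the smallest admissible number of measurements is $M=2K$. Conversely, if the entropy is not maximized, write $M_{eff}=M-r$ with $r\geq 1$; then $M_{eff}\geq 2K$ forces $M\geq 2K+r$, i.e. $r$ extra measurements must be taken to compensate for the $r$ coefficients of $\bm{y}$ that carry negligible information. Thus every unit by which $H(\bm{y})$ falls short of $\ln(M)$ costs (roughly) one extra measurement, and maximizing $H(\bm{y})$ makes $M$ as close to $2K$ as the construction allows.

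Finally, by Lemma 2 the same maximization of $H(\bm{y})$ guarantees that no nonzero $K$-sparse signal lies in $\mathcal{N}(\bm{A})$, and since $M\geq 2K$ gives $\text{spark}(\bm{A})>2K$, this extends to no nonzero $2K$-sparse signal lying in $\mathcal{N}(\bm{A})$ --- the null-space condition that yields unique recovery of every $K$-sparse signal. Putting the two together, entropy maximization simultaneously certifies recoverability (via Lemma 2) and, by forcing $M_{eff}\to M$, lets the required $M$ descend towards the information-theoretic floor $2K$, which is the assertion of the lemma.

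The main obstacle I anticipate is making precise the sense in which the maximum $H(\bm{y})=\ln(M)$ is actually attainable for a compressible class of signals under the RIP constraint the learning algorithm imposes: the maximization is over the free parameters of $\bm{A}=\bm{\Phi\Psi}$, and one cannot in general force the measurement vector of every training signal to have an exactly uniform squared-magnitude profile, which is precisely why the conclusion is stated as ``$M$ close to $2K$'' rather than ``$M=2K$''. A secondary subtlety is that $M_{eff}$ is only the \emph{effective} ($90\%$-energy) count of measurements, so the argument is inherently approximate, and the ceiling in $M_{eff}=\lceil\exp(H(\bm{y}))\rceil$ must be handled with the same $0\leq\epsilon<1$ slack used in the proof of Lemma 1. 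I would address these by invoking $K\approx\lceil\exp(H(\bm{c}))\rceil$ together with the empirical fact (cited from \cite{31}) that the theoretical dimension tracks the true energy-significant dimension, so the chain $K\approx n_{th,c}^{\Psi}$, $M_{eff}=n_{th,y}^{I}$, $M_{eff}\to M$ under maximization is consistent to within the rounding slack.
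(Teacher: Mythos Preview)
Your proposal is correct and follows essentially the same approach as the paper: introduce the gap $M-M_{eff}$ (your $r$, the paper's $\nu$), note that $M_{eff}\geq 2K$ forces $M\geq 2K+r$, and then argue that maximizing $H(\bm{y})$ drives $M_{eff}\to M$ so that $r\to 0$ and the lower bound on $M$ collapses to $2K$. Your additional invocation of Lemma~2 and the null-space/spark argument for unique recovery, together with the attainability caveats, are elaborations the paper places in the surrounding discussion rather than in the proof proper, but they do not change the route.
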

\begin{proof}
$M_{eff}$ is the number of non-zero coefficients of the measurement vector in the standard ordered basis, that capture at least $90\%$ of the energy. By definition $M_{eff}\leq M$. Let $M_{eff}=M-\nu$ with $0\leq \nu$. As mentioned earlier, the necessary condition for unique recovery is $M \geq 2K$ \cite{30}. If $M=2K$ then,
\begin{equation}
M_{eff}=2K-\nu .
\label{lem3_1}
\end{equation}
Since $M_{eff}$ gives the effective number of measurements, ideally the necessary condition for unique recovery would be $M_{eff} \geq 2K$. This is not satisfied by (\ref{lem3_1}). Hence the ideal lower bound on $M$ would be $2K+\nu$.

Since $M_{eff}=\lceil \exp(H(\bm{y}))\rceil$, increasing $H(\bm{y})$ leads to increase in $M_{eff}$, thus
\begin{equation}
M-M_{eff}\rightarrow 0 \Rightarrow \nu \rightarrow 0.
\label{lem3_2}
\end{equation} 

From (\ref{lem3_1}) and (\ref{lem3_2}) we see that increasing $H(\bm{y})$ results in reducing $\nu$ and hence $M_{eff}\approx M$. Hence, as $\nu$ decreases, the lower bound $2K+\nu$ on $M$ comes close to $2K$. Thus, maximization of $H(\bm{y})$ leads to unique recovery with a reduced number of measurements $M$ close to $2K$.  
\end{proof}

Based on Lemmas 1, 2 and 3, it can be concluded that the matrix $\bm{A}$ that maximizes $H(\bm{y})$ would capture maximum information from the vector of coefficients $\bm{c}$ into a reduced set of measurements $\bm{y}$. It also implies that the smallest required number of measurements $M$ could be as small as $2K$. Hence, we propose a learning scheme for identifying $\bm{A}$, and thus $\bm{\Phi}$ for a class of signals such that $\bm{\Phi}$ maximizes the entropy $H(\bm{y})$ of the measurement vector $\bm{y}$.  
\section{Entropy Maximizing Sensing (EMS) Matrix Design}  
\label{algo}
We propose a two-stage learning procedure to identify an efficient measurement matrix $\bm{\Phi}$ for a class of compressible signals. The learning method is motivated by the two stage dictionary/transform learning algorithms \cite{31}\cite{36} \cite{21}. The dictionary learning also finds application in Blind Compressive Sensing (BCS) \cite{40}, where the measurement matrix $\bm{A}=\bm{\Phi\Psi}$ (with $\bm{\Psi}$ unknown) is learned using the dictionary learning approach. In this paper, we do not consider the BCS framework.  

The first stage of the proposed algorithm finds a set of entropy maximized measurements of the signals in the training set $\bm{X}$. The second stage tries to learn a $\bm{\Phi}$ based on the desired measurements, sparsifying basis, and the set of training signals. The two stages of the algorithm are alternately performed for a fixed number of iterations.
\subsection{Stage I}
Consider the matrix of $N$ dimensional training signals $\bm{X}\in \mathbb{R}^{N \times L}$, where each column is a signal from the training set and $L$ is the number of signals in the training set. Let $\bm{\Psi}\in \mathbb{R}^{N\times N}$ be a known orthonormal sparsifying basis (with the basis vectors $\lbrace \bm{\psi}_i\rbrace_{i=1}^N$ arranged as its columns) such that $\bm{X}=\bm{\Psi C}$, where $\bm{C}\in\mathbb{R}^{N\times L}$ is the matrix of the representation coefficients of the signals in $\bm{X}$ with respect to $\bm{\Psi}$. That is, $\bm{C}_p$ (the p-th column of $\bm{C}$) is the representation of the p-th signal $\bm{X}_p$ with respect to $\bm{\Psi}$. Given $\bm{\Psi}$ and an initial measurement matrix $\bm{\Phi}\in \mathbb{R}^{M\times N}$, we need to find the reduced set of measurements $\bm{Y}\in\mathbb{R}^{M\times L}$, of the training signals in $\bm{X}$, where $\bm{Y}=\bm{\Phi X}=\bm{\Phi \Psi C}=\bm{ AC}$, such that the entropy of the set of measurements of each signal in $\bm{X}$ is maximized. Since for the $p$-th signal $\bm{X_p}$, the measurement vector $\bm{Y_p}=\bm{AC_p}$ is independent of the measurement of the $q$-th signal $\bm{Y_q}=\bm{AC_q}$ $(p\neq q)$, we can update the measurements considering all the signals separately. Thus, the problem can be formulated as 
\begin{equation}
\widehat{\bm{Y}}_j=\text{arg}\max_{\bm{Y}_j} H(\bm{Y}_j),
\end{equation} 
where $\bm{Y}_j$ is the $j$-th column of $\bm{Y}$, and $\widehat{\bm{Y}}_j$, the $j$-th column of $\widehat{\bm{Y}}$, is the measurement vector of the $j$-th signal $\bm{X}_j$ having maximized entropy. Using Definition \ref{ent_def}, we can rewrite the problem as
\begin{equation}
\widehat{\bm{Y}}_j=\text{arg} \max_{\bm{Y}_j} \sum_{i=1}^M -\frac{|y_{ij}|^2}{\Vert \bm{Y}_j \Vert_2^2}\ln\left(\frac{|y_{ij}|^2}{\Vert \bm{Y}_j \Vert_2^2}\right) ,
\label{max_ent1}
\end{equation}
where $\bm{Y}_j=\lbrace y_{ij}\rbrace_{i=1}^M$. The RIP in (\ref{RIP}) can be restated as 
\begin{equation}
(1-\delta)\leq \frac{\Vert \bm{A C}_j\Vert_2^2}{\Vert \bm{C}_j\Vert_2^2} \leq (1+\delta).
\vspace{-2.5pt}
\end{equation}
Thus, we can reformulate the problem (\ref{max_ent1}), to satisfy the RIP, as
\begin{align}
\widehat{\bm{Y}}_j=\text{arg} \max_{\bm{Y}_j} \sum_{i=1}^M -\frac{|y_{ij}|^2}{\Vert \bm{Y}_j \Vert_2^2}\ln\left(\frac{|y_{ij}|^2}{\Vert \bm{Y}_j \Vert_2^2}\right) \nonumber\\
\text{subject to  } \left\lbrace \left(\frac{\Vert \bm{Y}_j\Vert_2}{\Vert \bm{C}_j\Vert_2}\right)^2-1\right\rbrace^2\leq\delta^2 .
\label{max_ent2}
\end{align} 
The constrained problem in (\ref{max_ent2}) can be made unconstrained by using the penalty method. To incorporate the penalty, we convert the maximization problem to a minimization problem. Hence the stage I solution is given as
\begin{dmath}
\widehat{Y}_j=\text{arg} \min_{\bm{Y}_j} \left\lbrace\sum_{i=1}^M \frac{|y_{ij}|^2}{\Vert \bm{Y}_j \Vert_2^2}\ln\left(\frac{|y_{ij}|^2}{\Vert \bm{Y}_j \Vert_2^2}\right)+
\alpha \left\vert \left(\left\lbrace \left(\Vert \bm{Y}_j\Vert_2/\Vert \bm{C}_j\Vert_2\right)^2-1\right\rbrace^2\right) -\delta^2\right\vert \right\rbrace .
\label{max_ent3}
\end{dmath} 
Since the absolute value function of the penalty term is non-differentiable at zero, we approximate it using the relaxation $\vert \bm{z}\vert\approx\vert \bm{z}\vert^{\zeta}=\sqrt{\bm{z}^*\bm{z}+\zeta}$ where $\zeta$ is taken to be $10^{-15}$ \cite{32}. The value of $\delta$ is chosen depending on the desired RIP constant. The bounds on the value of $\delta$ for signals compressible in a dictionary are discussed in \cite{42}. We have arbitrarily chosen $\delta=0.1$ for the experiments in the paper. The problem (\ref{max_ent3}) can be solved using any optimization algorithm with the initial vector as $\bm{Y}_j=\bm{\Phi\Psi C}_j$.

\subsection{Stage II}
Stage I gives the desired measurements $\widehat{\bm{Y}}$ that maximizes the entropy of the measurement vectors. In stage II, we identify the matrix $\bm{A}$ and hence $\bm{\Phi}$ that would lead to $\widehat{\bm{Y}}$. The problem can be formulated as 
\begin{equation}
\widehat{\bm{A}}=\text{arg}\min_{\bm{A}} \Vert \widehat{\bm{Y}} -\bm{A C}\Vert_F^2.
\label{phi_update1}
\end{equation}
where the matrix $\bm{A}$ has orthogonal rows and $\Vert . \Vert_F$ represents the Frobenius norm. To solve the problem in (\ref{phi_update1}), we use the orthogonal Procrustes method. The orthogonal Procrustes problem  \cite{21} \cite{20} \cite{33} is: 
\begin{align}
\widehat{\bm{R}}=\text{arg}\min_{\bm{R}}\Vert \bm{RP}-\bm{Q} \Vert_F^2 \qquad \text{s.t.   } \bm{RR}^T=I,
\label{pro}
\end{align}
where $\widehat{\bm{R}}$ is an orthogonal square matrix to be found such that $\widehat{\bm{R}}$ acts on a matrix $\bm{P}$ to result in the matrix $\bm{Q}$. Considering the singular value decomposition of $\bm{PQ}^T$ as $\widetilde{\bm{U}}\widetilde{\bm{\Delta}} \widetilde{\bm{V}}^T$, the solution to (\ref{pro}) is $\bm{R}=\widetilde{\bm{V}}\widetilde{\bm{U}}^T$.

Since the matrix $\widehat{\bm{A}}$ in (\ref{phi_update1}) is not a square matrix, the solution to the Orthogonal Procrustes Problem has to be altered. If the singular value decomposition of $\bm{C}\widehat{\bm{Y}}^T$ is given by $\bm{U\Delta V}^T$, where $\bm{U}\in \mathbb{R}^{N\times N}$ and $\bm{V}\in \mathbb{R}^{M\times M}$, we propose that the solution to the problem (\ref{phi_update1}) is given by $\widehat{\bm{A}}=\bm{VU_M}^T$, where $\bm{U_M}$ contains the first $M$ columns of $\bm{U}$ which correspond to the largest $M$ singular values of $\bm{C\widehat{Y}}^T$ (see Appendix B). The desired measurement matrix $\widehat{\bm{\Phi}}$ can be obtained from $\widehat{\bm{A}}$ as
\begin{equation}
\widehat{\bm{\Phi}}=\widehat{\bm{A}}\bm{\Psi}^{-1},
\end{equation}
which holds good because $\bm{\Psi}$ is a well conditioned matrix of the representation basis for the class of signals considered.
\begin{figure}[t!]
\centering
\subfloat[][]{\includegraphics[width=4.5cm, height=3.5cm]{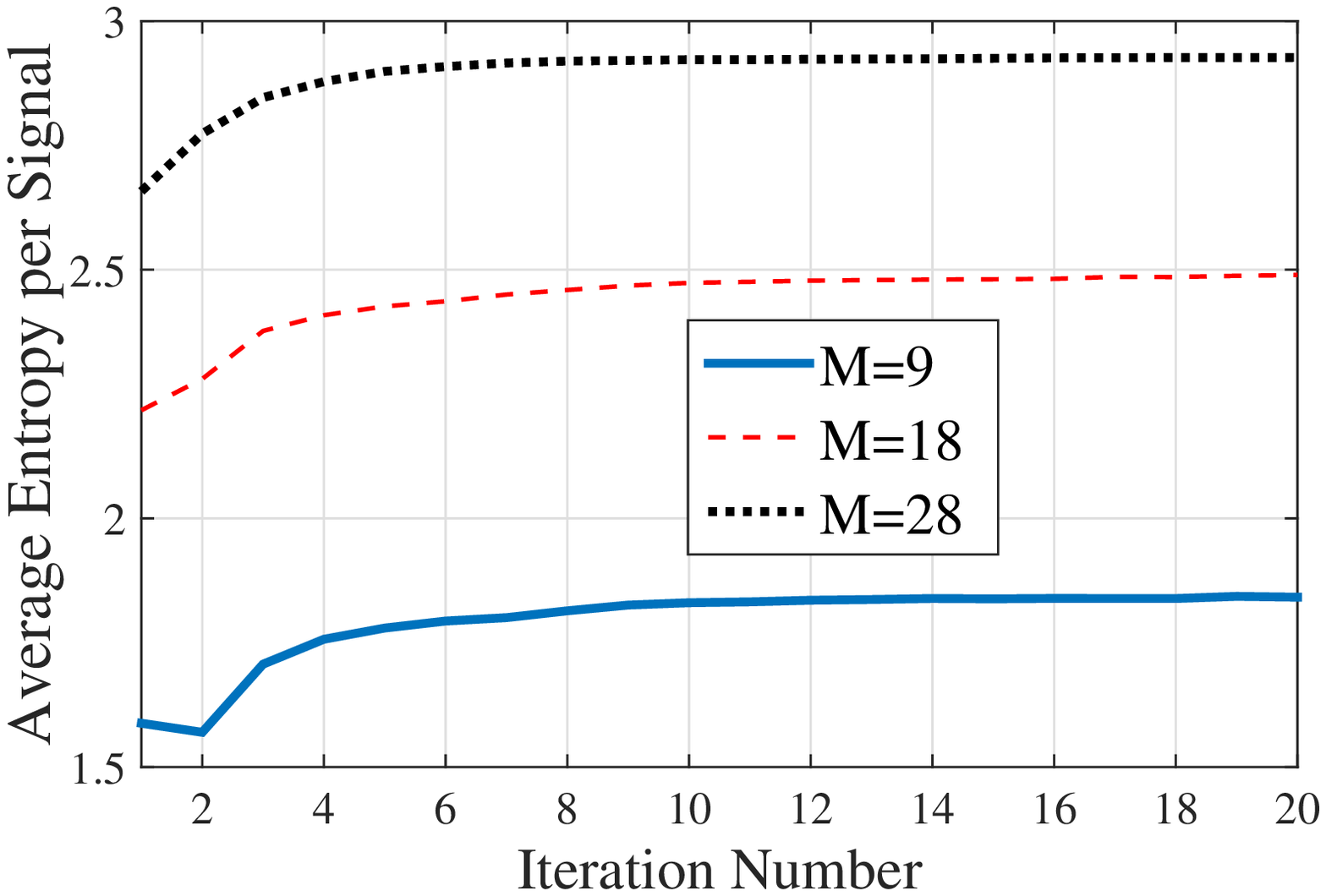}} \subfloat[][]{\includegraphics[width=4.5cm, height=3.5cm]{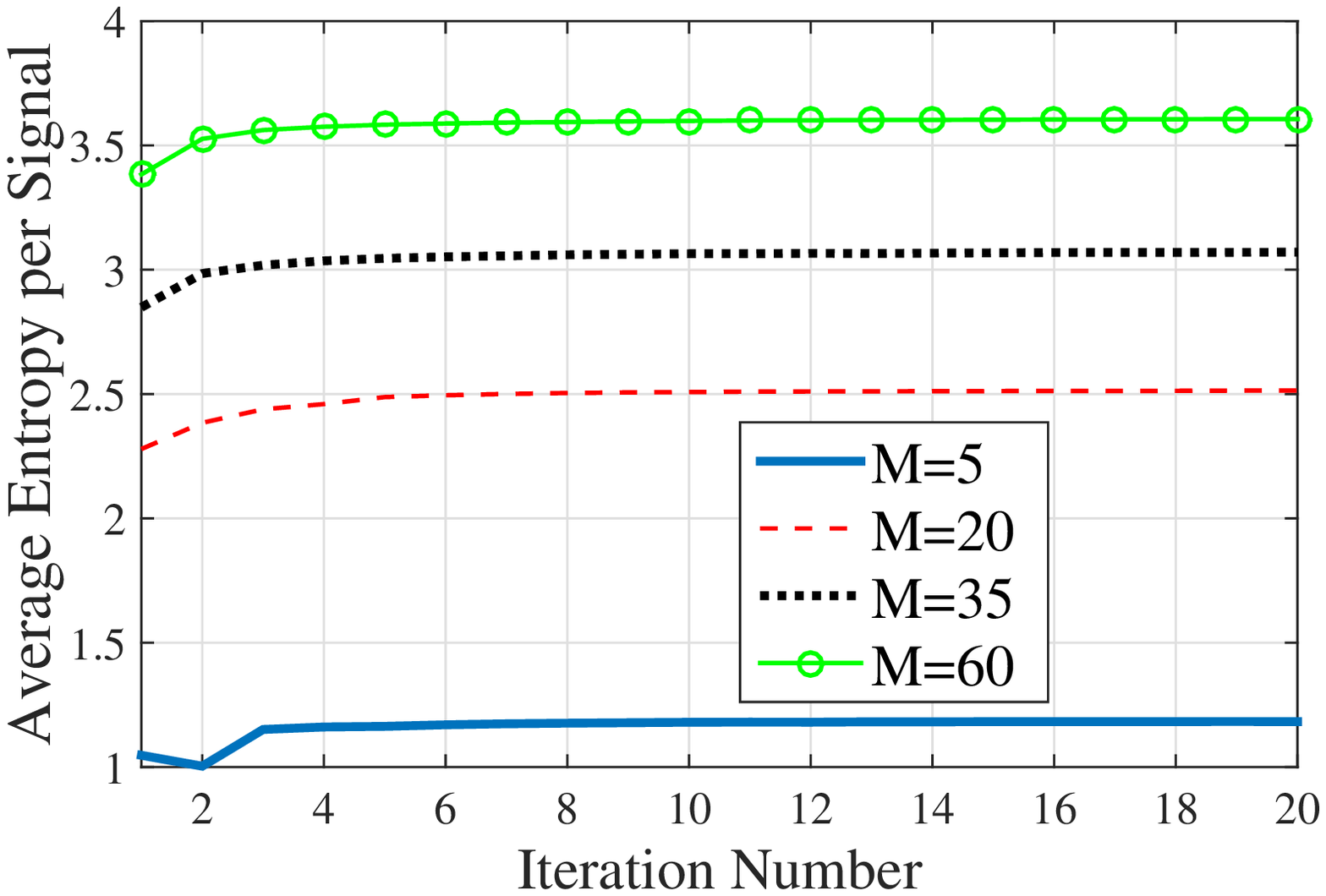}}
\caption{Saturation of average entropy per signal for (a) synthetic signals and (b) speech signals.}
\label{convergence}
\end{figure}
\begin{algorithm}[t!]
\caption{Entropy maximizing sensing(EMS) matrix design}
\begin{algorithmic}[1]
\Require Training set $\bm{X}_{N\times L}$, Initial measurement matrix $\widehat{\bm{\Phi}}_{M\times N}^{(0)}$, Sparsifying transform $\bm{\Psi}_{N\times N}$ 
\Ensure Measurement matrix $\widehat{\bm{\Phi}}_{M\times N}$

\State Initialize $\alpha=1$ and $\delta=0.1$
\State $\bm{C}=\bm{\Psi}^{-1} \bm{X}$
 \For{$k=1$ to $n$} 
\Algphase{Stage I}
 \For{$	j=1$ to $L$}
 \State Find the desired measurements $\widehat{\bm{Y}}_j^{(k)}$ starting with the initial measurement $\bm{Y}_j=\widehat{\bm{\Phi}}^{(k-1)}\bm{\Psi C}_j$ 
 
 \begin{align}
\widehat{Y}_j=\text{arg} \min_{\bm{Y}_j} \sum_{i=1}^M \frac{|y_{ij}|^2}{\Vert \bm{Y}_j \Vert_2^2}\ln\left(\frac{|y_{ij}|^2}{\Vert \bm{Y}_j \Vert_2^2}\right)+\nonumber\\
\alpha \left\vert \left(\left\lbrace \left(\Vert \bm{Y}_j\Vert_2/\Vert \bm{C}_j\Vert_2\right)^2-1\right\rbrace^2\right) -\delta^2\right\vert .\nonumber
\end{align} 

 \EndFor
\Algphase{Stage II}
 \State $\bm{E}=\bm{C}(\widehat{\bm{Y}}^{(k)})^T$
 \State $\bm{E}=\bm{U\Delta V}^T$
 \State Obtain $\bm{U_M}$ as the first $M$ columns of $\bm{U}$. 
 \State $\widehat{\bm{A}}^{(k)}=\bm{VU_M}^T$.
 \State Find the measurement matrix $\widehat{\bm{\Phi}}^{(k)}=\widehat{\bm{A}}^{(k)}\bm{\Psi}^{-1}$
 \EndFor
\end{algorithmic}
\end{algorithm}

Fig. \ref{convergence} (a) and Fig. \ref{convergence}(b) show the variation of the average entropy per signal as the iteration progresses for synthetic and speech signals, respectively; the variation saturates after a finite number of iterations indicating the convergence of the algorithm empirically. The figures were generated by setting $\alpha=1$ and $\delta=0.1$ in the algorithm.
\begin{figure}[ht!]
\centering
\includegraphics[width=6cm, height=4.5cm]{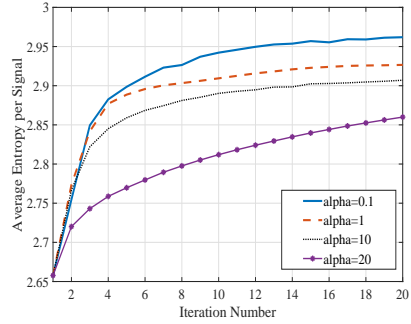}
\caption{Saturation of average entropy per signal (10-sparse synthetic signal) with $M=28$ and for different values of $\alpha$.}
\label{alpha_ver_reco}
\end{figure}
\begin{figure*}[ht!]
\centering
\subfloat[][]{\includegraphics[width=5.5cm, height=4cm]{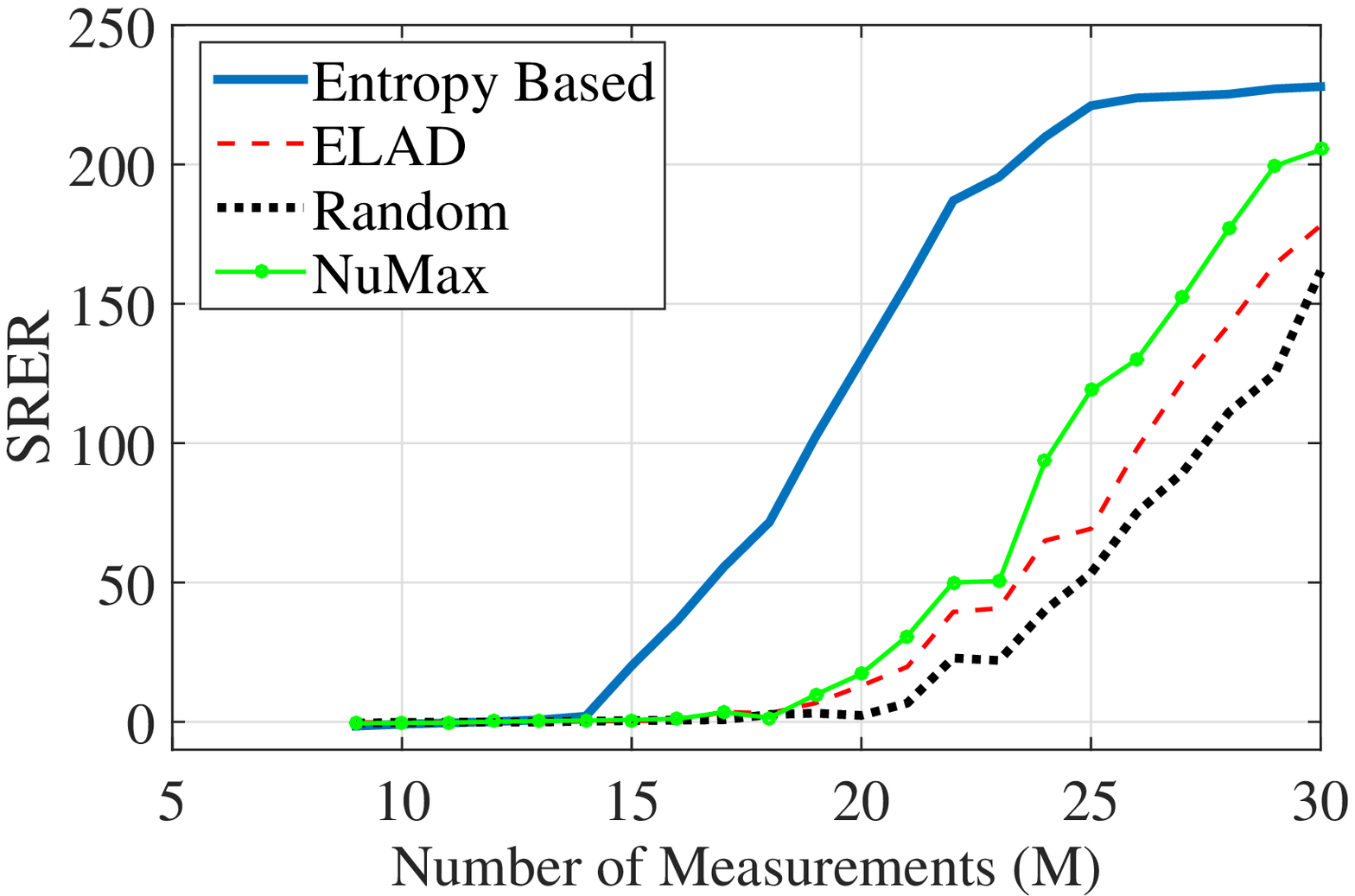}}\quad\subfloat[][]{\includegraphics[width=5.5cm, height=4cm]{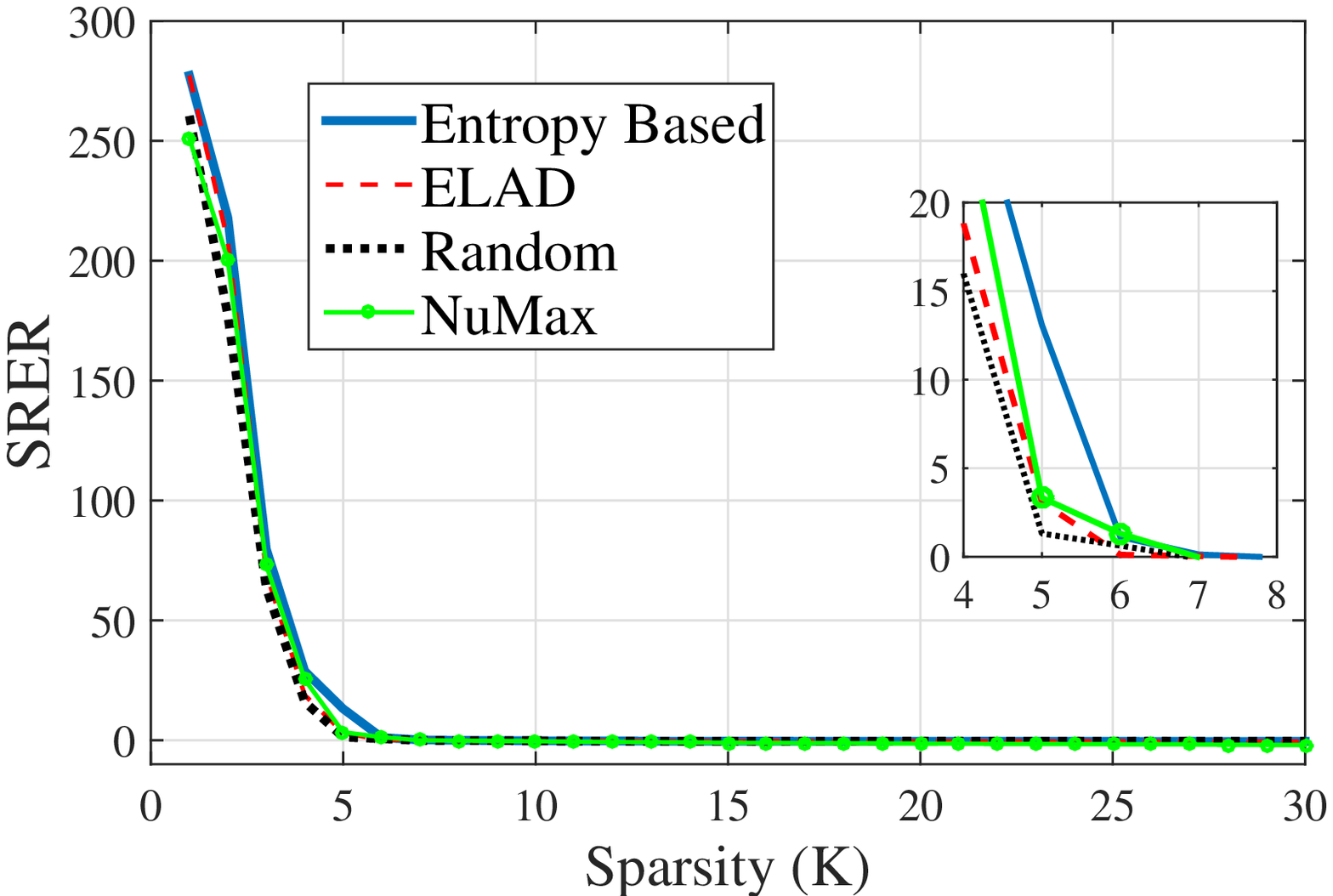}}\\ \subfloat[][]{\includegraphics[width=5.5cm, height=4cm]{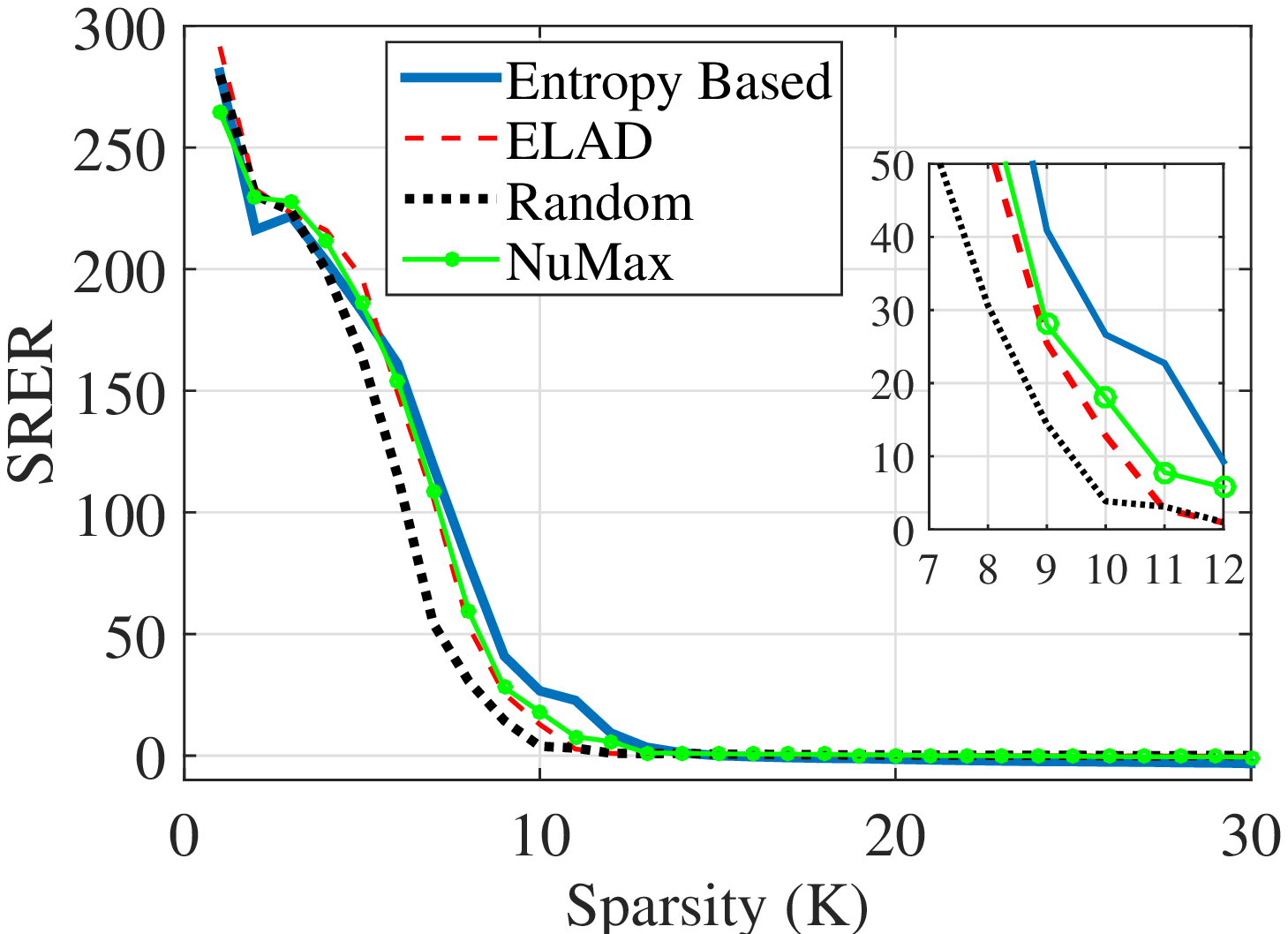}}\quad
\subfloat[][]{\includegraphics[width=5.5cm, height=4cm]{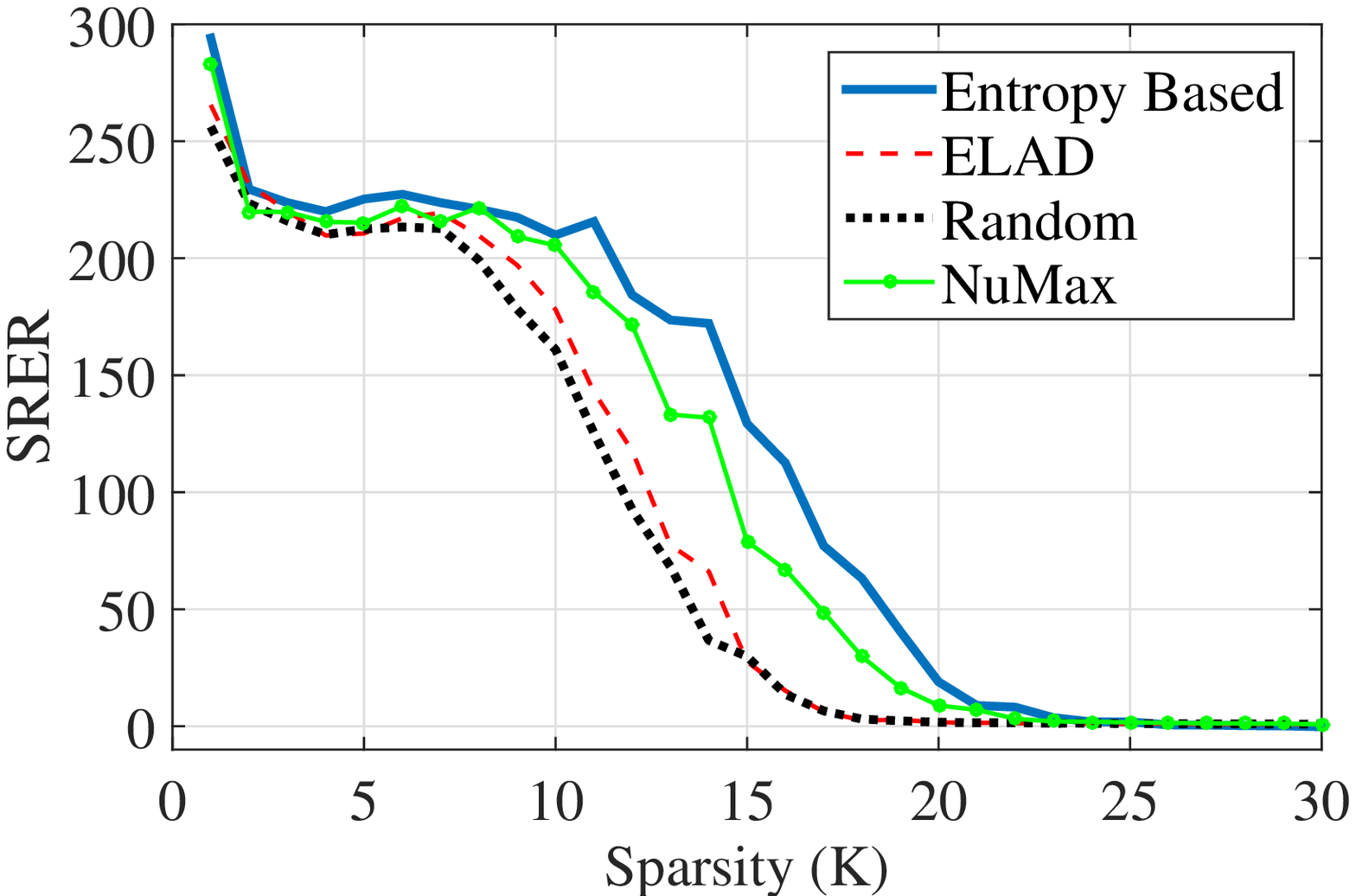}}
\caption{Average SRER (dB) of signals sparse in DCT basis recovered using BP against (a) the number of measurements, with sparsity 10 (b) sparsity, with $M=10$ (c) sparsity, with $M=20$ (d) sparsity, with $M=30$ .}
\label{reco_er_sp1}
\end{figure*}

The value of $\alpha$ in stage I decides the rate of convergence and the value of the maximum entropy to which the convergence occurs. Large values of $\alpha$ would mean that the RIP constraint is strictly followed in stage I and the maximum entropy $H(\widehat{\bm{Y}}_j)$ obtained in stage I of each iteration would be less. Since the stage II finds the matrix $\bm{A}$ such that the measurements $\bm{AC}_j$ is as close as possible to $\widehat{\bm{Y}}_j$, the final entropy $H(\bm{\widehat{A}C}_j)$ at each iteration would be in small increments. Hence, the algorithm would converge slowly to a lower value of maximum entropy. For smaller values of $\alpha$, the RIP constraint will not be followed strictly in stage I and the maximum entropy obtained in each iteration would be close to $\ln(M)$. In such a case, the algorithm will converge fast to a higher entropy value. The discussion is validated in the Fig. \ref{alpha_ver_reco}, which shows the saturation of average entropy per signal for synthetic signals for different values of $\alpha$. The experimental setup in this paper uses $\alpha = 1$. 

\subsection{Discussion}
Consider the representation $\bm{C}_j\in \bm{C}$ of the $j$-th signal $\bm{X_j}\in \bm{X}$ with respect to the basis $\bm{\Psi}$ which is measured by $\bm{A}$ to get the measurement vector $\bm{Y}_j$. Let $\bm{\Lambda}$ be the set of indices corresponding to the coefficients in $\bm{C}_j$ having high magnitudes, which capture  at least $90\%$ energy of the signal. Let $\bm{\Lambda^c}$ be the complement set containing indices corresponding to the coefficients in $\bm{C}_j$ having negligible magnitude ($<10\%$ signal energy). In a strictly sparse case, $\bm{\Lambda}$ will contain the indices corresponding to the non-zero coefficients and $\bm{\Lambda^c}$ will contain that corresponding to the  zeros. 

The algorithm identifies a matrix $\bm{A}$ such that the measurement vector of every signal in the training set has maximum entropy. The $j$-th measurement vector $\bm{Y}_j$ attains maximum entropy when its probability distribution of representation tends to be uniform; that is $|y_{ij}|\approx |y_{kj}|$ for $i\neq k$ 

Maximization of the entropy can lead to  $|y_{ij}|\approx |y_{kj}|$ for $i\neq k$ with either all the $y_{ij}$'s being small or all the $y_{ij}$'s being large. This wide separation of values may occur if the rows of $\bm{A}$ scale $\bm{C}_j$ unduly. In the proposed algorithm, since $\bm{A}$ is generated using the orthogonal Procrustes method, the rows of $\bm{A}$ are orthonormal. Hence $\bm{A}$ does not drastically scale the coefficients in $\bm{C}_j$. 

Since $\bm{A}$ does not unduly scale the coefficients in $\bm{C}_j$, the case with all the $y_{ij}$'s being small will occur when $\bm{A}$ captures the information from the coefficients present in $\bm{\Lambda^c}$, which do not contain important information of the signal. Hence this is an undesired case. The RIP penalty term ensures that the energy of the measurement vector $\bm{Y}_j$ does not deviate largely from the signal energy. Hence the case of all the $y_{ij}$'s being small is eliminated. 

Hence, the entropy maximized measurements $\widehat{\bm{Y}}_j$ will capture maximum information from the coefficients of $\bm{C}_j$ corresponding to the indices in $\bm{\Lambda}$.  

\textit{Remark:} The entropy based measure of sparsity used in this paper depends on the $2$-norm of the signal. An entropy based sparsity measure depending on $p$-norm ($p\leq 1$) is proposed in \cite{37}. Rao and Kreutz-Delgado \cite{38}  discuss the various criteria that the $p$-norm dependent entropy based measures of sparsity satisfy, for different values of $p$. The experimental results of \cite{39} show that the freedom to choose suitable values of $p$ helps in fully exploiting the sparsity-promoting potential of the entropy function. Hence, the performance of the proposed EMS algorithm may improve on choosing an appropriate sparsity promoting value of $p$ instead of the $2$-norm. We leave the detailed discussion and analysis for future study.  

\section{Results}\label{results}
This section discusses the performance of the EMS matrix constructed using the proposed algorithm applied to a class of synthetic,  speech, and image signals. The performance evaluation was done for noise-free signals and noisy signals. To measure the performance, the recovery methods used were the $l_1$ minimization or the Basis pursuit (BP) \cite{1}, the Entropy matching pursuit (EMP) \cite{17}, and the Orthogonal matching pursuit (OMP) \cite{22} algorithms. The performance evaluation is done by calculating the signal to reconstruction error (SRER) which is given by
\vspace{-5pt}
\begin{equation}
SRER=10\log_{10}\frac{\sum_i x_i^2}{\sum_i(x_i-\hat{x_i})^2},
\vspace{-2.5pt}
\end{equation}
where $\bm{x}$ is the original signal and $\hat{\bm{x}}$ is the recovered signal. Performance comparison is done for the measurements obtained from noise-free signals and noisy signals.  

\begin{figure}[!ht]
\centering
\includegraphics[width=6.0cm, height=4.4cm]{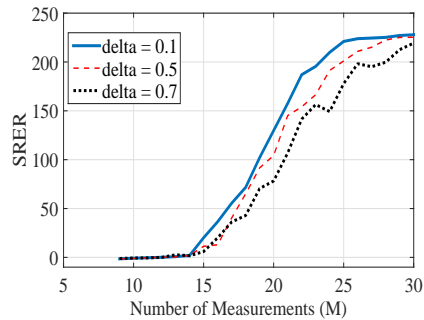}
\caption{Average SRER of signals sparse in DCT basis with sparsity 10 for $\delta=0.02, 0.1, 0.5$, and $0.7$.}
\label{del_ver_reco}
\end{figure} 
 
 The performance of the EMS matrix constructed is compared with that of the random projection matrix, the optimized measurement matrix proposed by Elad \cite{5}, and the projection matrix generated by the NuMax \cite{25}\cite{26}. The measurement matrix construction using Elad's method and the NuMax algorithm were studied experimentally, using the softwares available in \cite{23} and \cite{27}, respectively. The values of the parameters used for the construction of Elad's measurement matrix are as mentioned in \cite{5}. The training set used for the NuMax algorithm is the same as that used to train the EMS matrix.
\begin{figure*}[ht!]
\centering
 \subfloat[][]{\includegraphics[width=5.9cm, height=4.4cm]{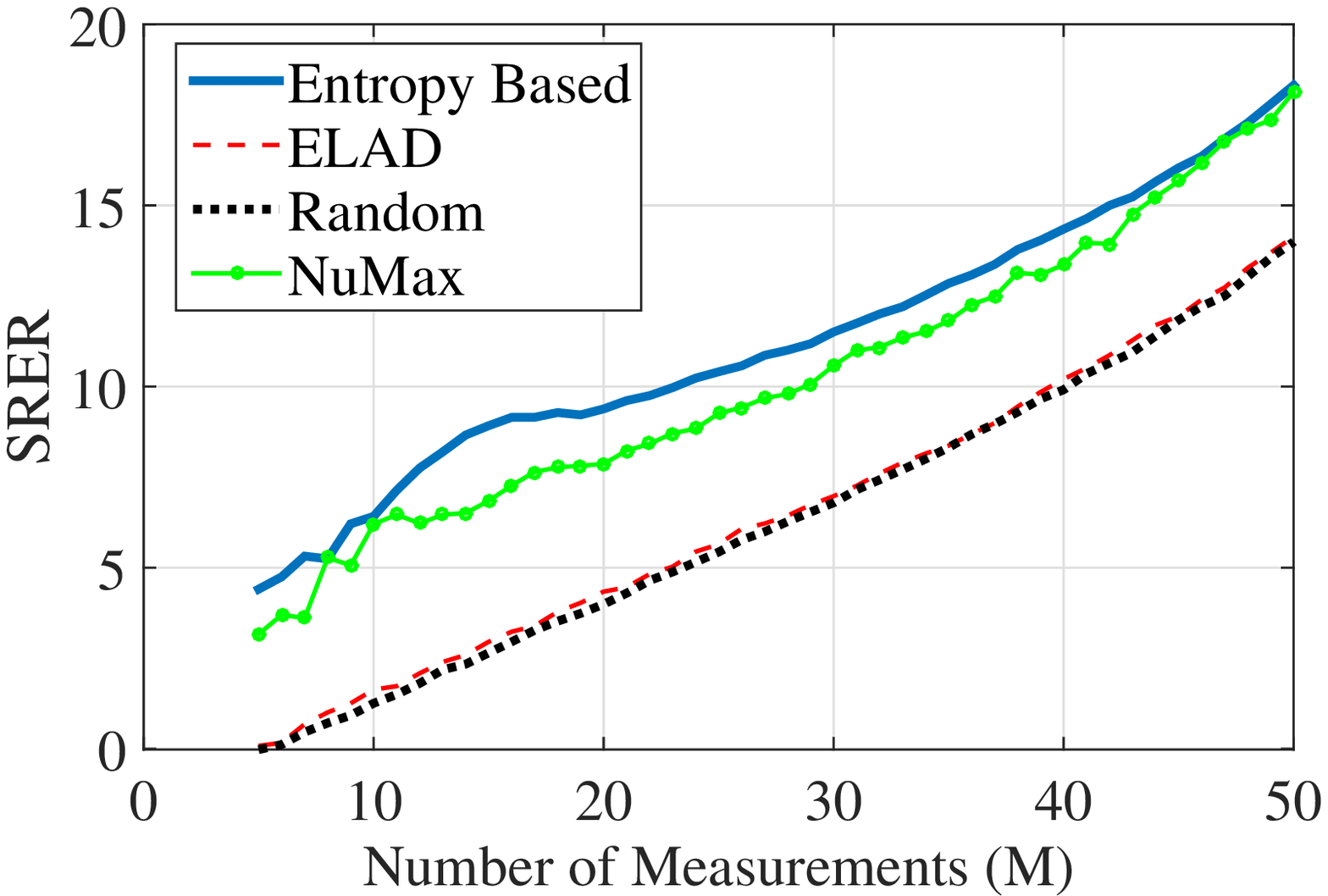}}
 \subfloat[][]{\includegraphics[width=5.9cm, height=4.4cm]{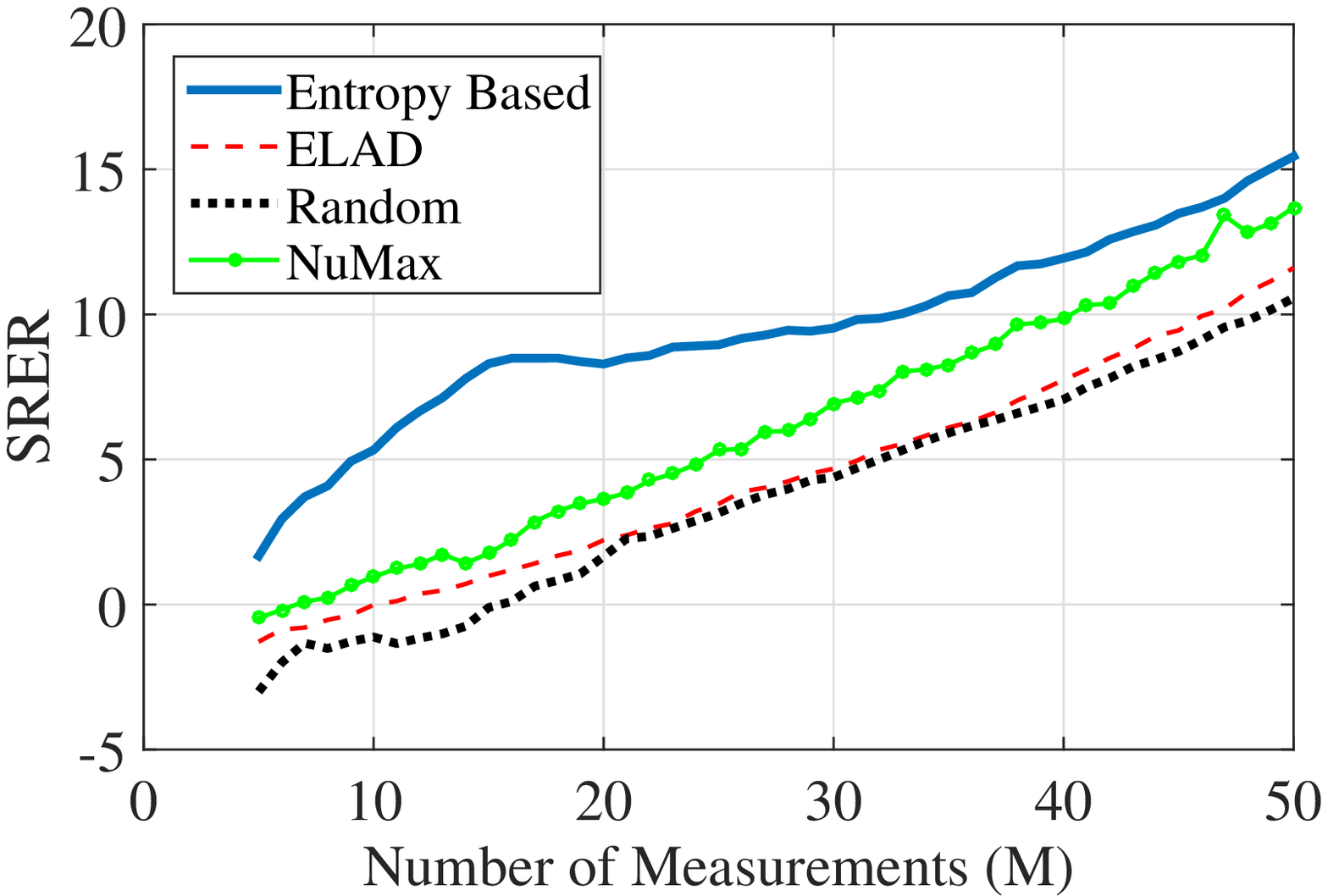}} \subfloat[][]{\includegraphics[width=5.9cm, height=4.4cm]{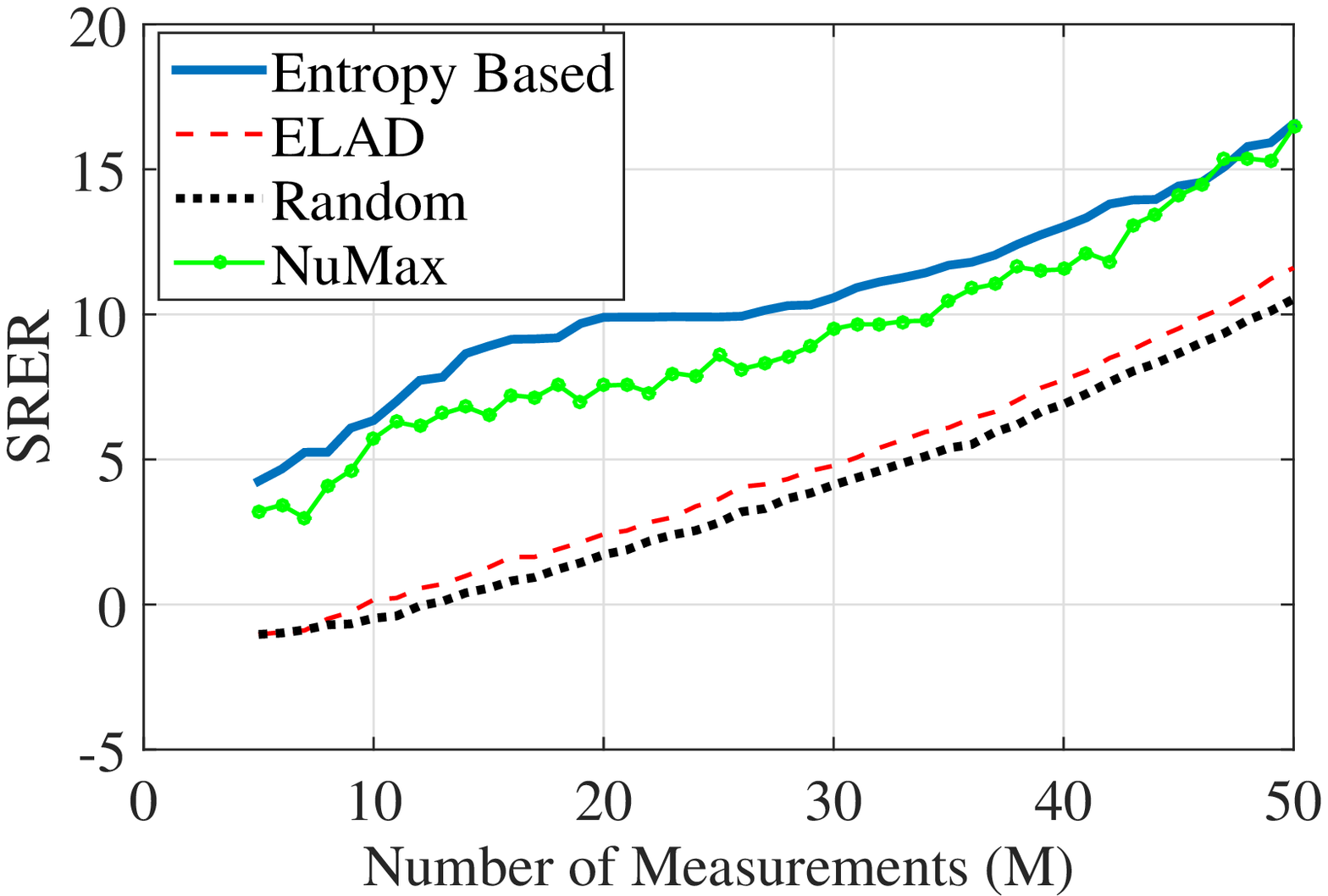}}\quad 
\caption{Average SRER (dB) of speech signals against the number of measurements with DCT as sparsifying basis and using recovery algorithms (a) BP, (b) EMP and (c) OMP.}
\label{reco_er_voice1}
\end{figure*}
\begin{figure*}[ht!]
\centering
\subfloat[][]{\includegraphics[width=5.9cm, height=4.4cm]{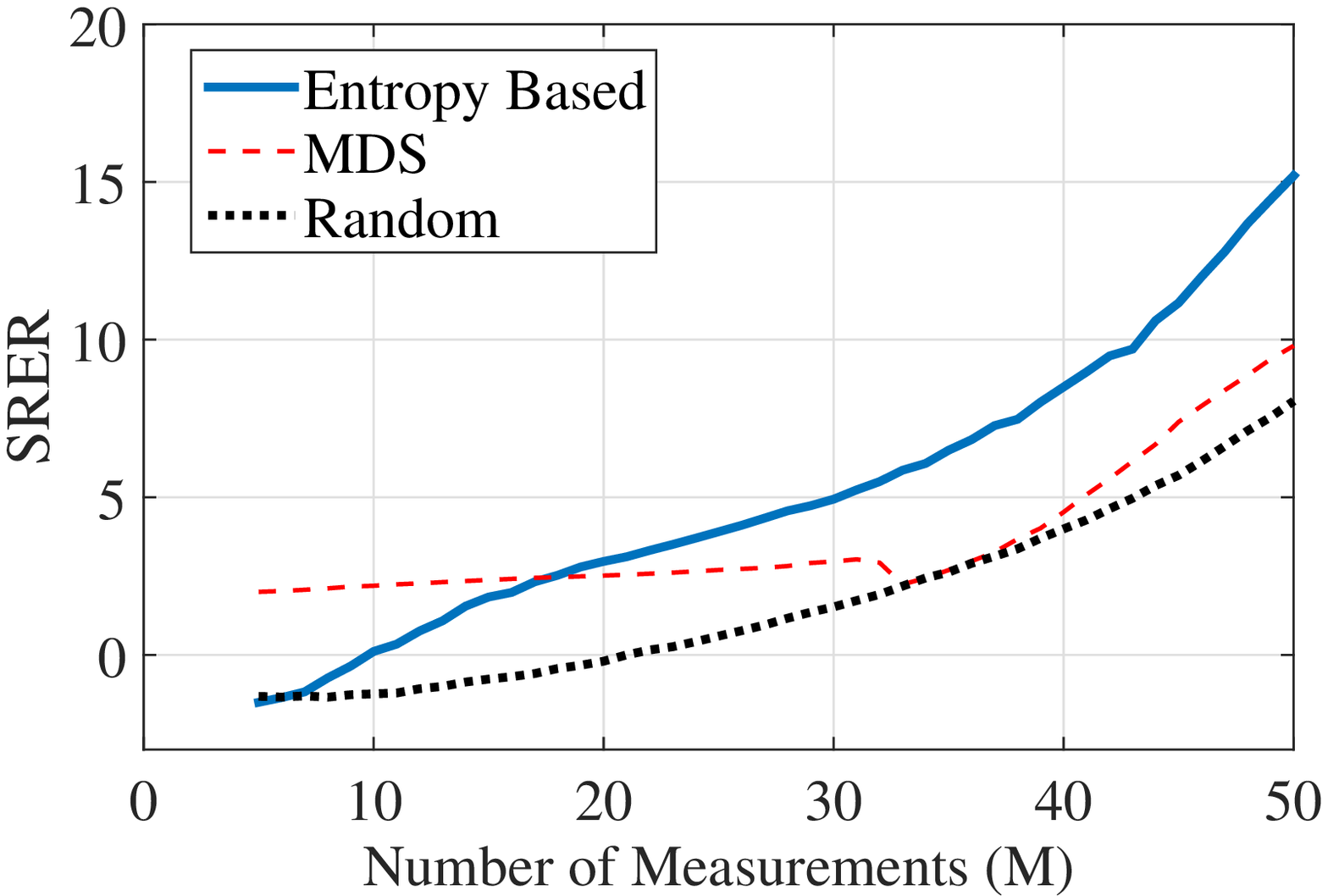}} \subfloat[][]{\includegraphics[width=5.9cm, height=4.4cm]{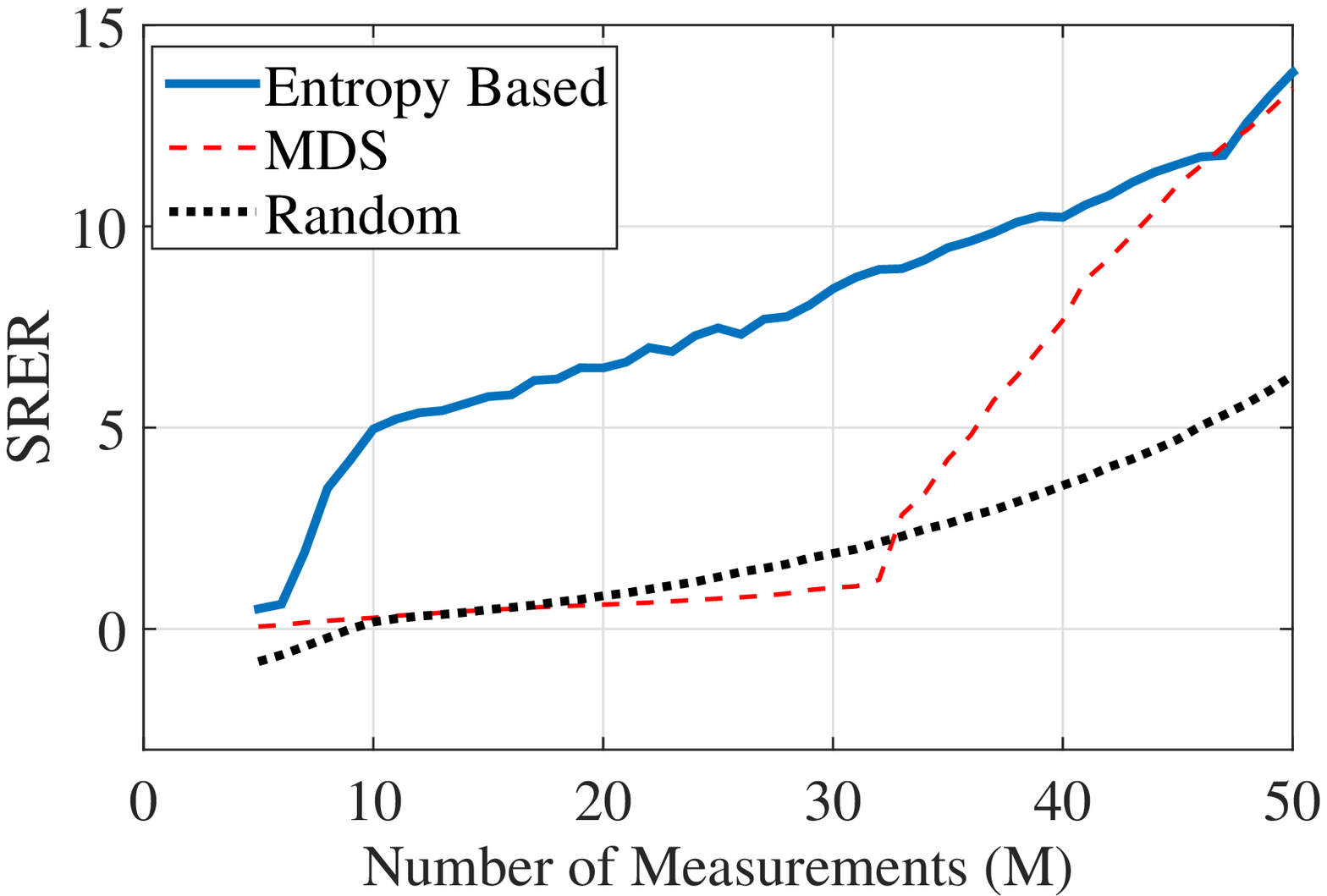}}\subfloat[][]{\includegraphics[width=5.9cm, height=4.4cm]{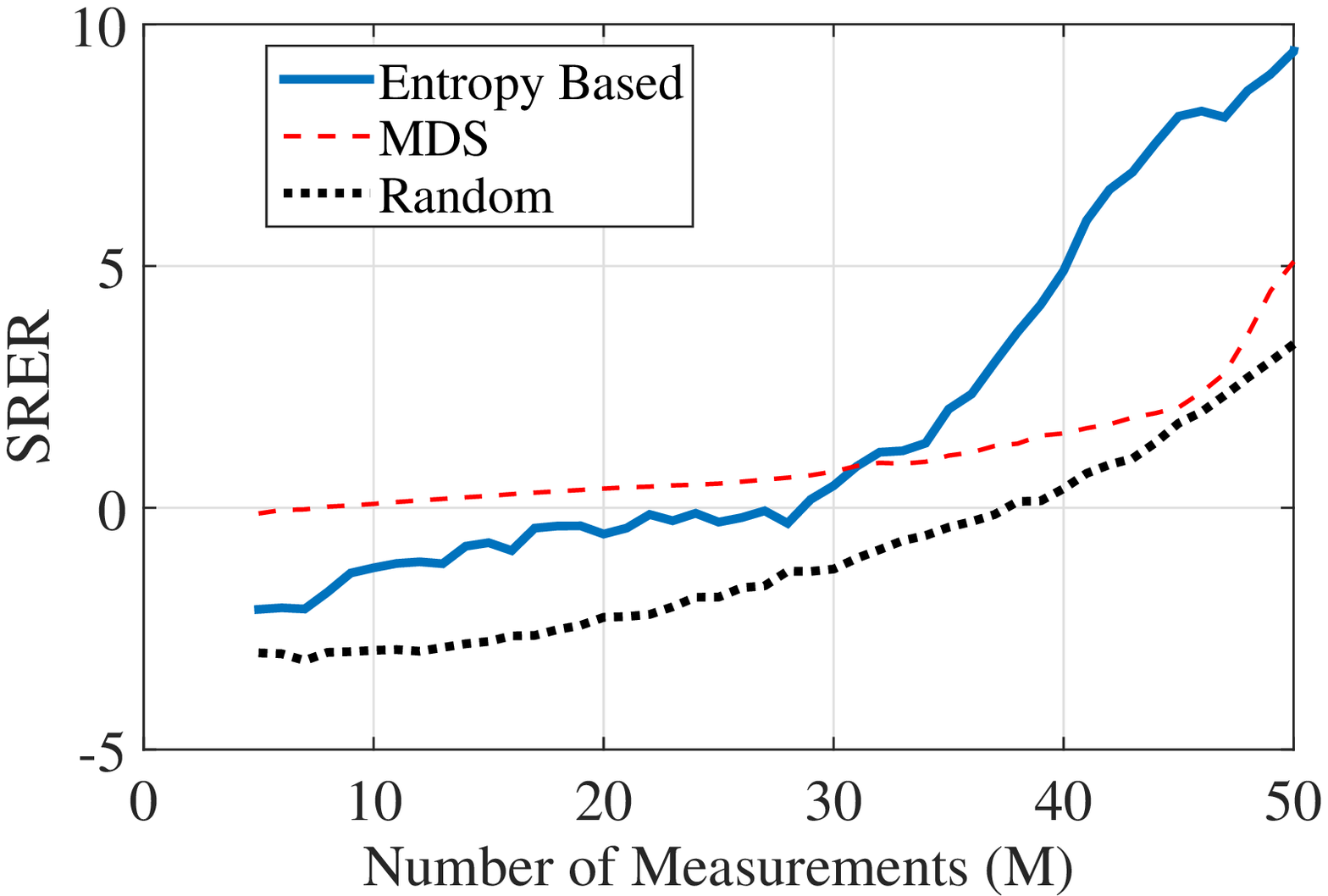}}
\caption{Average SRER (dB) of speech signals against the number of measurements with Biorthogonal wavelets as sparsifying basis and using recovery algorithms (a) BP, (b) EMP and (c) OMP.}
\label{reco_er_voice2}
\end{figure*}

\subsection{Noise-free Signals}

\subsubsection{Synthetic signals}\label{synthetic}
A set of signals, sparse in the Discrete Cosine Transform (DCT) basis with arbitrary support, was generated to test the performance of the EMS matrix. Two experiments were performed with the synthetic signals. For the first experiment, a set of $200$ signals of dimension $64$ and sparsity $10$ were generated. The measurement matrices for different values of $M$ were generated using the proposed algorithm for this class of synthetic signals. The BP algorithm was used to recover the signals from the reduced set of measurements. The average SRER plot for varying $M$ is shown in Fig \ref{reco_er_sp1}(a). 

The second experiment was to find the recovery performance when the number of measurements $M$ is fixed and the sparsity is varied. Signals of dimension $64$ and sparsity varying from $K=1$ to $30$ were constructed. A set of $200$ signals for each $K$ value was generated and concatenated to generate the training set (the training set hence contained 6000 signals). The EMS matrix was generated with $M=10$, $M=20$ and $M=30$ using these training signals. The variation of SRER against $K$ is plotted in Fig. \ref{reco_er_sp1}(b)-(d). 

The SRER plots show that the EMS matrix gives higher values of SRER even with less number of measurements, for strictly sparse signals, than attained using sensing matrices constructed using other methods. Fig. \ref{reco_er_sp1}(b) shows that for a signal with $l_0$-sparsity $K=5$ and $M=10=2K$, the SRER is close to $13$dB when the measurements were made using the EMS matrix. Whereas the SRER is less than $5$dB when other measurement matrices were used for sensing. Similarly, Fig \ref{reco_er_sp1}(c) shows that for a signal with $K=10$ the SRER is close to $26$dB when the EMS matrix with $M=2K=20$ measurements was used, which is approximately $8$dB greater than that of NuMax which gives the next best performance. Fig. \ref{reco_er_sp1}(d) shows an improvement of about $50$dB over other measurement matrices for signal with $K=15$ and $M=30$. These observations confirm the claim in Lemma \ref{lemma3}.     

The constant $\delta$ in Eqn. (\ref{max_ent3}) defines the upper bound on the radius of the recovery error sphere. It fixes the stability of a recovery algorithm which works on the measurement vector obtained using the $\bm{\Phi}$ matrix identified. The smaller the value of $\delta$ used in Eqn. (\ref{max_ent3}), the better will be the recovery as validated in Fig. \ref{del_ver_reco} which shows that as the value of delta increases, the SRER decreases.
\begin{figure*}[ht!]
\centering
 \subfloat[][]{\includegraphics[width=5.9cm, height=4.4cm]{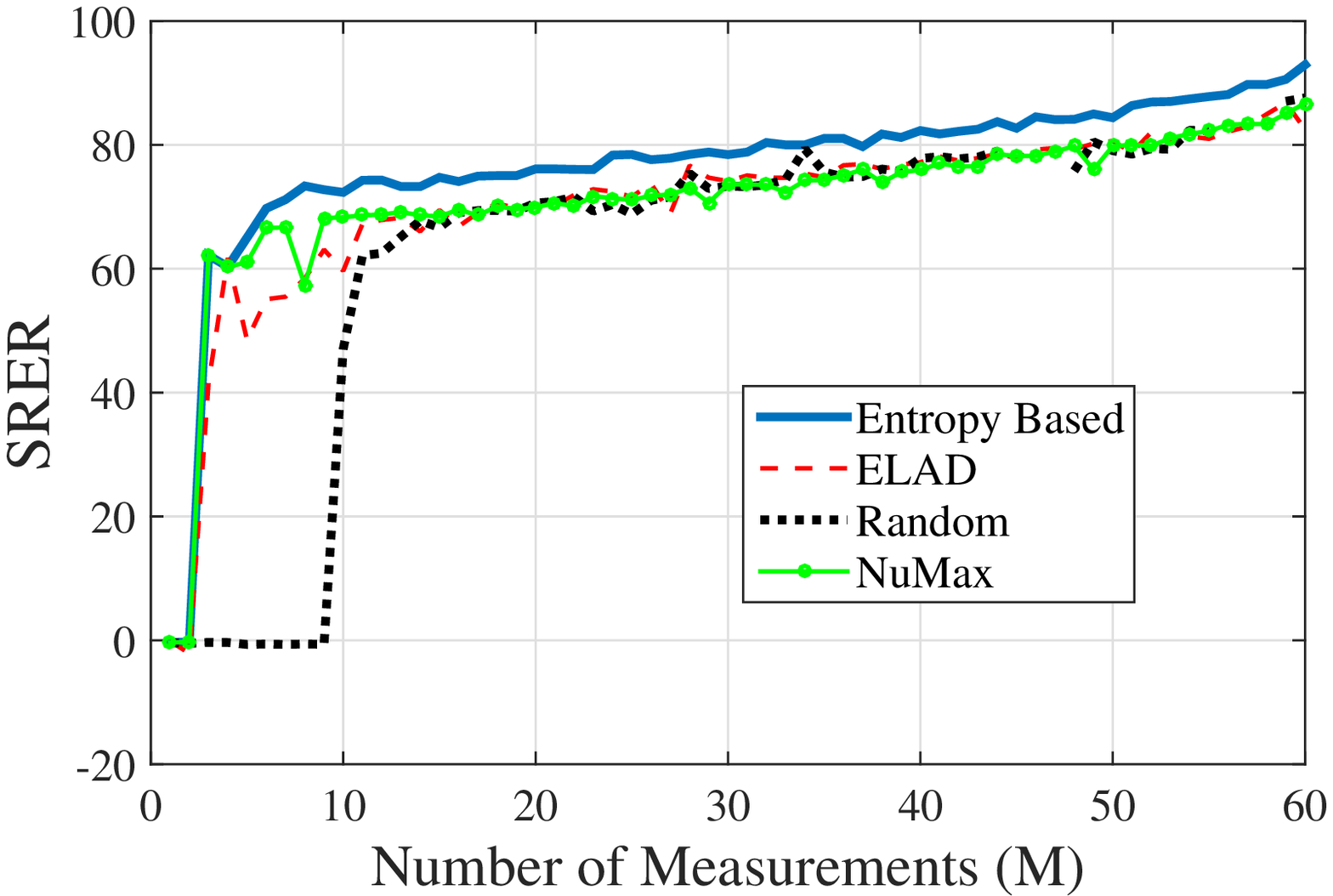}}
 \subfloat[][]{\includegraphics[width=5.9cm, height=4.4cm]{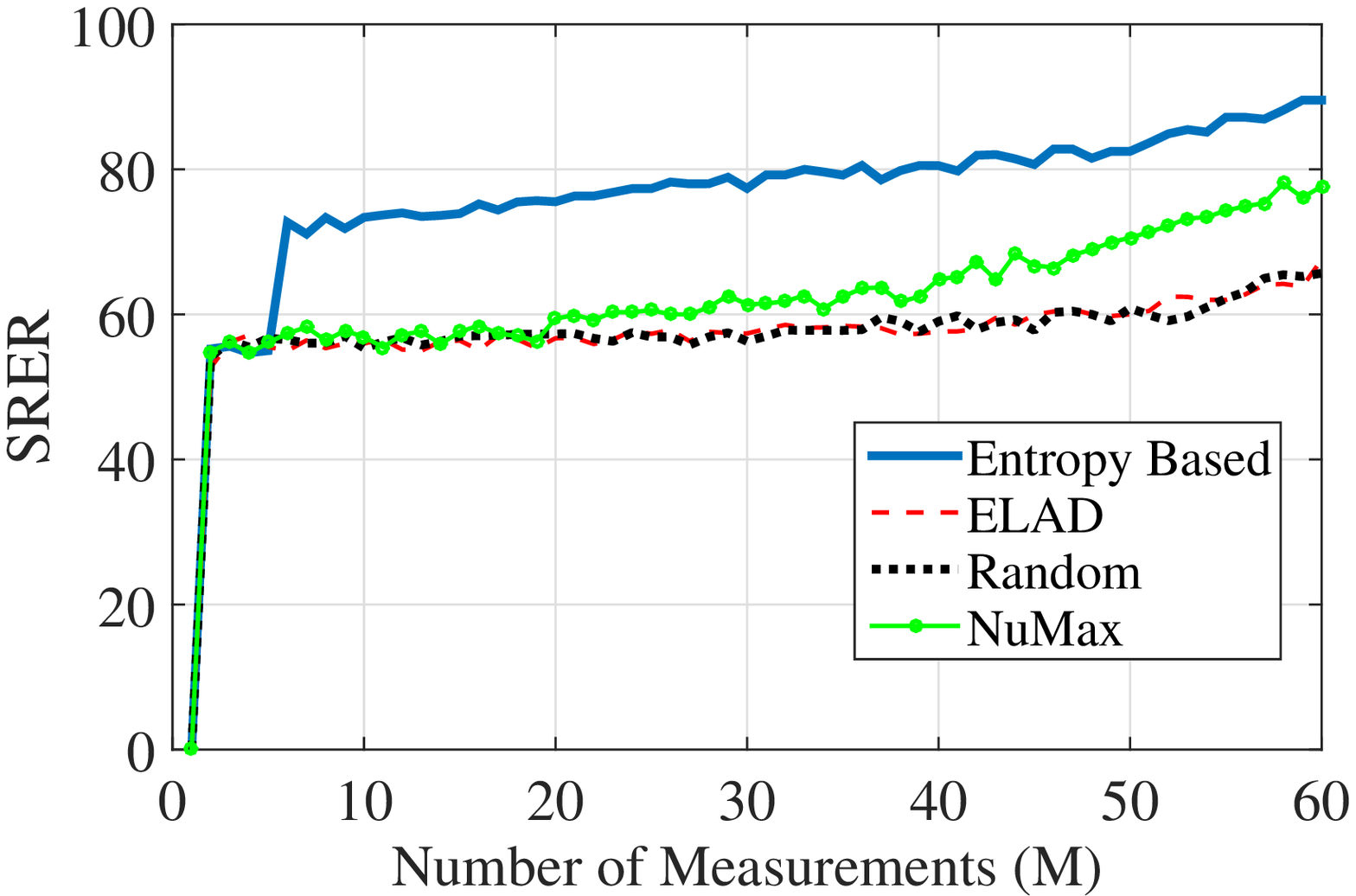}} \subfloat[][]{\includegraphics[width=5.9cm, height=4.4cm]{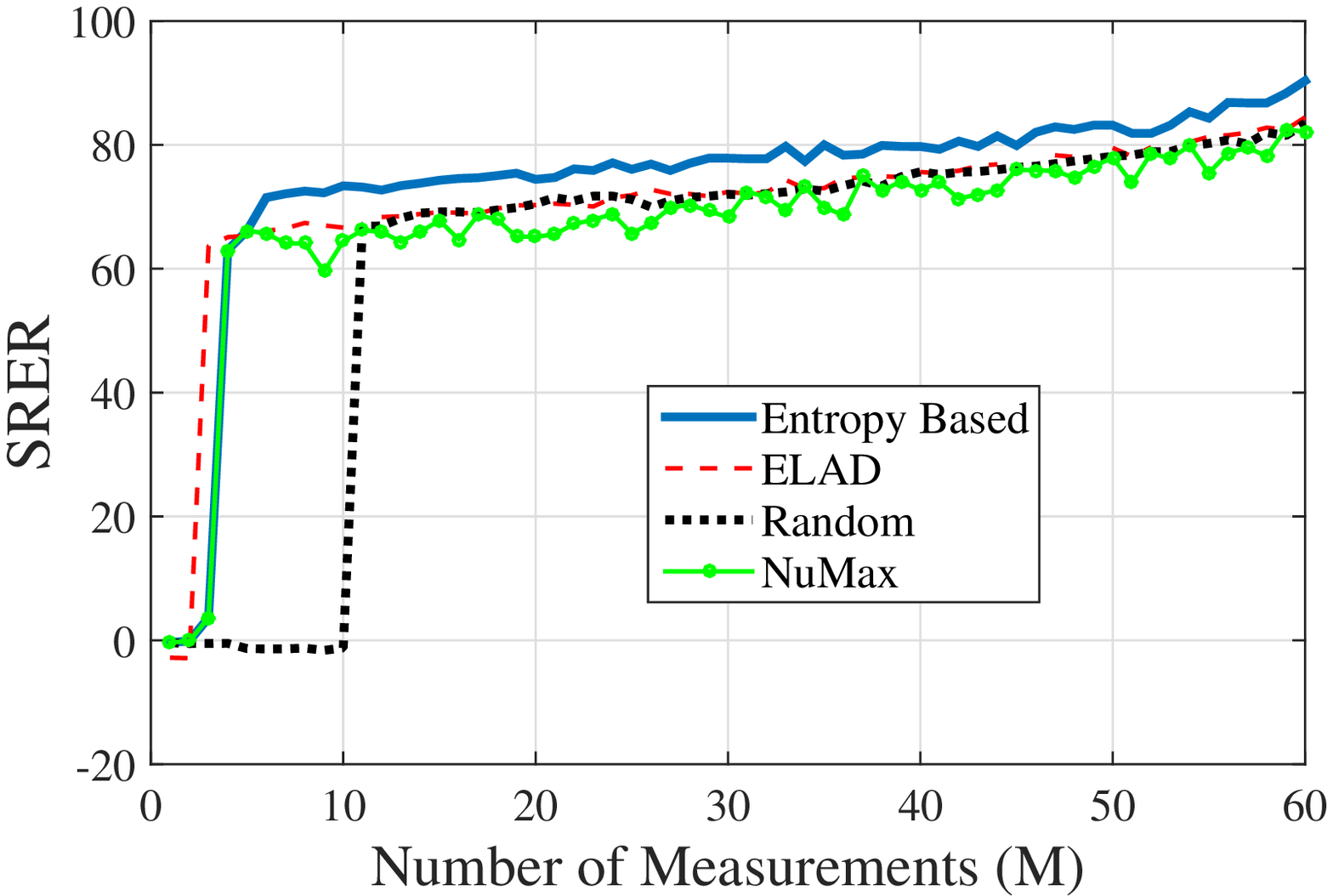}} 
\caption{Average SRER (dB) of image signals against the number of measurements with DCT as sparsifying basis and using recovery algorithms (a) BP, (b) EMP and (c) OMP.}
\label{reco_er_im}
\end{figure*}
\begin{figure*}[ht!]
\centering
\subfloat[][]{\includegraphics[width=5.25cm, height=3.75cm]{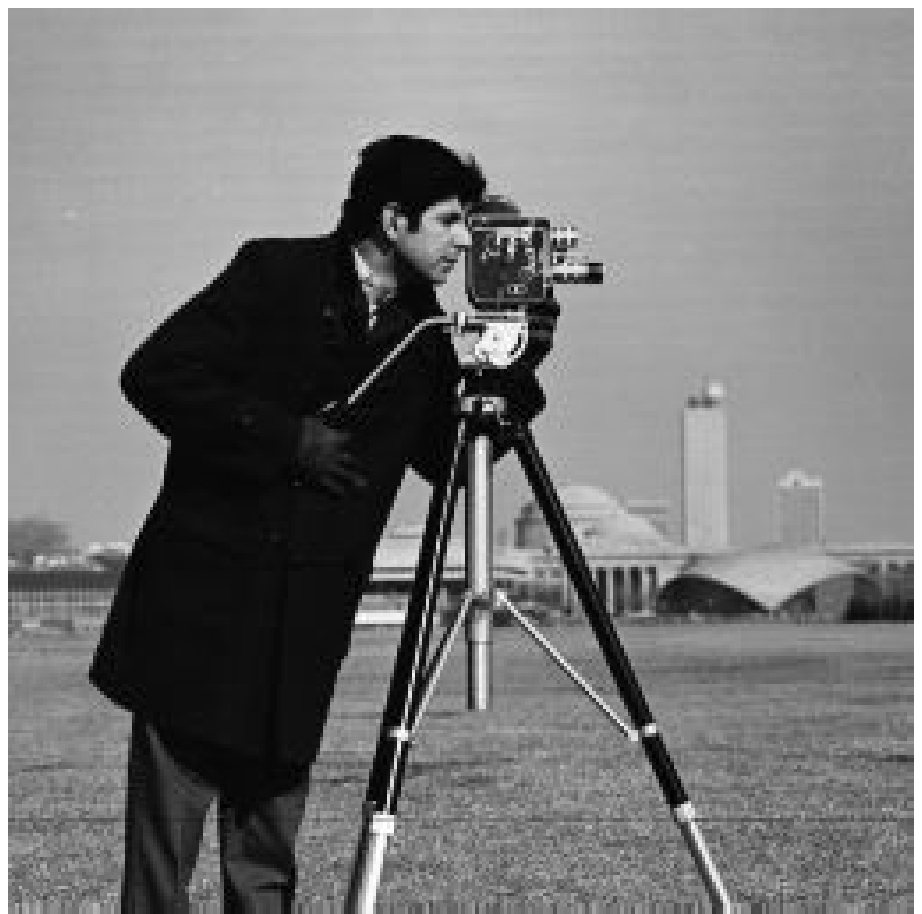}}\quad\subfloat[][]{\includegraphics[width=5.25cm, height=3.75cm]{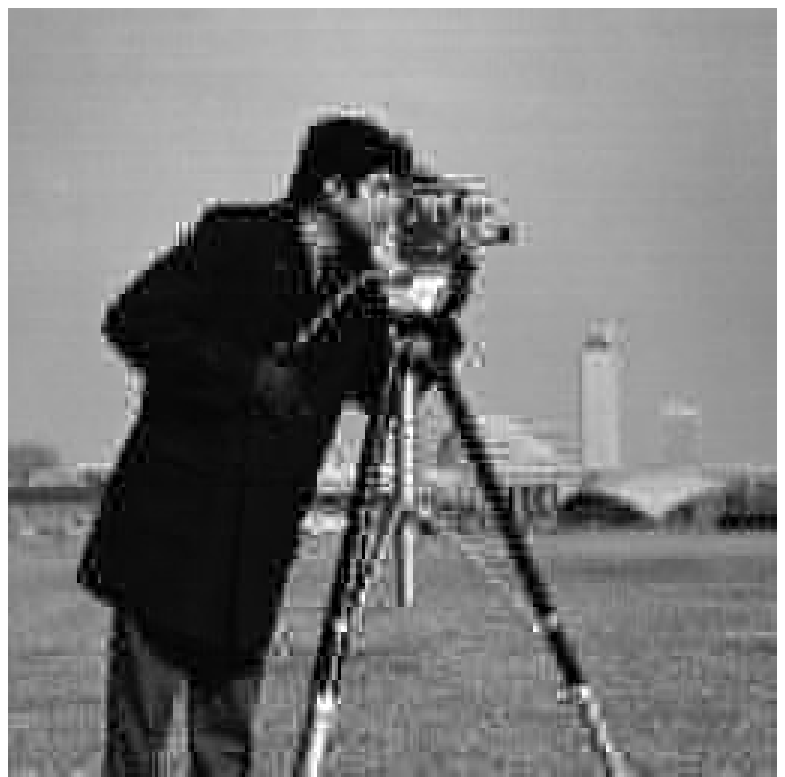}}\\ \subfloat[][]{\includegraphics[width=5.25cm, height=3.75cm]{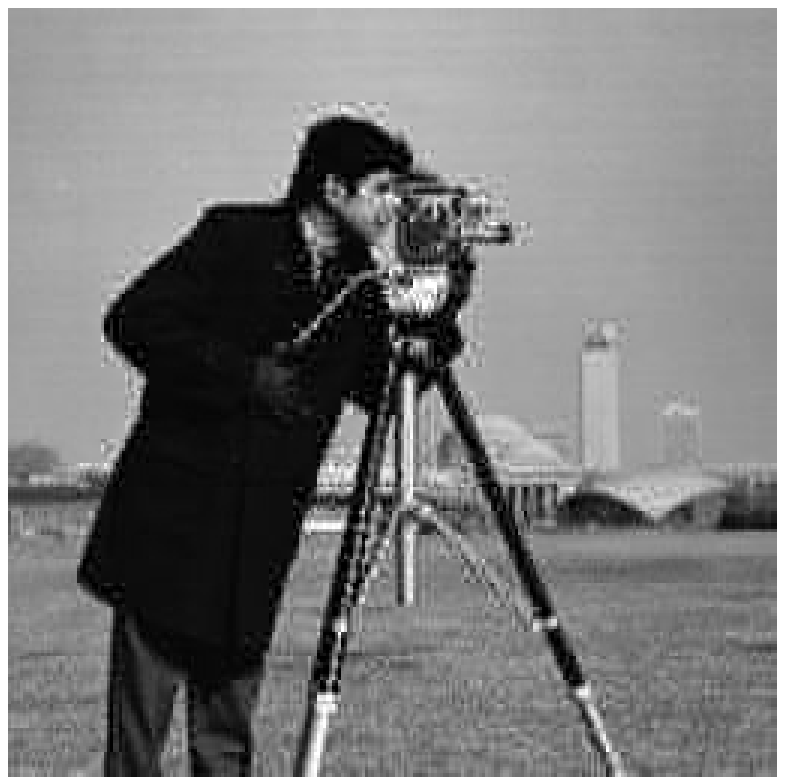}}\quad
\subfloat[][]{\includegraphics[width=5.25cm, height=3.75cm]{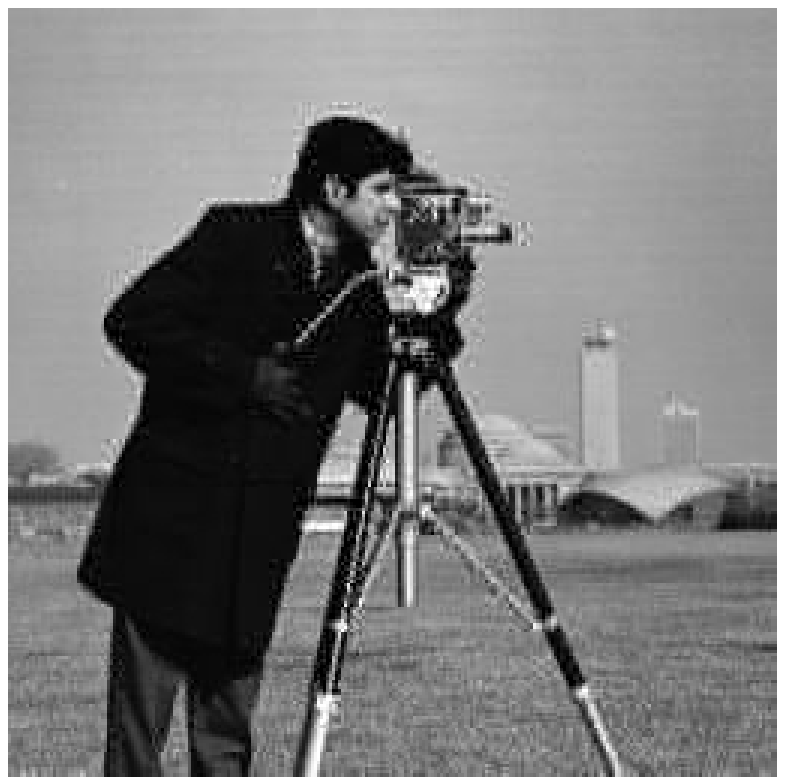}}
\caption{(a) Original image; Images measured using the EMS matrix and recovered using BP algorithm with (b) M=10 (PSNR=24.16 dB), (c) M=15 (PSNR=25.53 dB), (d) M=20 (PSNR=27.05 dB).}
\label{reco_im1.1}
\end{figure*}

The actual RIP constants obtained when the EMS matrix acts on a signal need not be the same as the one used in the algorithm. The values of $\delta$ obtained for the synthetic signals with sparsity 10 is given in the Table \ref{del_values}. The table gives the range of $\delta$ values obtained for the 200 signals. 

\begin{table}[hb!]
\centering
\caption{RIP constants for varying values of $M$ for synthetic signals with \\sparsity 10.}
\label{del_values}
{\begin{tabular*}{20pc}{@{\extracolsep{\fill}}cc@{}}\toprule
No. of measurements (M)	&Range of $\delta$\\
\hline
9						& 0.24-0.50\\
15						& 0.17-0.47\\
20						& 0.14-0.40\\
25						& 0.08-0.29\\
30						& 0.06-0.26\\
\hline
%
\end{tabular*}}{}
\end{table}  
Table \ref{del_values} shows that though the obtained values of $\delta$ are not exactly the same as the value of $\delta$ assigned in the algorithm, the measurement matrix does satisfy the RIP. Further, as the number of measurements increases, the value of $\delta$ comes closer to that of the value assigned for $\delta$.

\subsubsection{Speech signals}
A set of 2450 speech signals of dimension $64$ from the database in the Linguistic Data Consortium for Indian Languages (LDC-IL) \cite{34}, sampled at 8kHz, was used as the training set. The measurement matrices for a set of $M$ values were constructed by applying the proposed learning algorithm to these training signals with the DCT as the sparsifying basis. In Fig. \ref{reco_er_voice1}, a comparison of the SRER against $M$, for the test signals (signals belonging to the class but outside the training set) sampled using sensing matrices identified through various methods, is shown. In Fig. \ref{reco_er_voice1} the SRER curves of Elad's measurement matrix and random measurement matrix are very close. The results in \cite{5} show that the performance of the measurement matrix decreases as the $l_0$-sparsity $K$ increases. For speech signals the value of $K$ is not fixed and there may be signals in the test set with high values of $K$. This condition accounts for the low performance of the Elad's measurement matrix.  

We have compared the performance of the EMS matrix with the sensing matrix generated by applying MDS on a biorthogonal wavelet basis. For comparison, we identified the EMS matrix with the biorthogonal basis as the sparsifying basis $\bm{\Psi}$. The SRER plots are shown in Fig. \ref{reco_er_voice2}. 

The plots in Fig. \ref{reco_er_voice1} and Fig. \ref{reco_er_voice2} indicate that the sensing matrix generated leads to improved performance with the number of measurements smaller than that required by other measurement matrices. The improvement is more pronounced when EMP, an entropy based recovery algorithm, is used for recovery.

\begin{figure*}[ht!]
\centering
\subfloat[][]{\includegraphics[width=5.25cm, height=3.75cm]{cam_orig.eps}}\quad\subfloat[][]{\includegraphics[width=5.25cm, height=3.75cm]{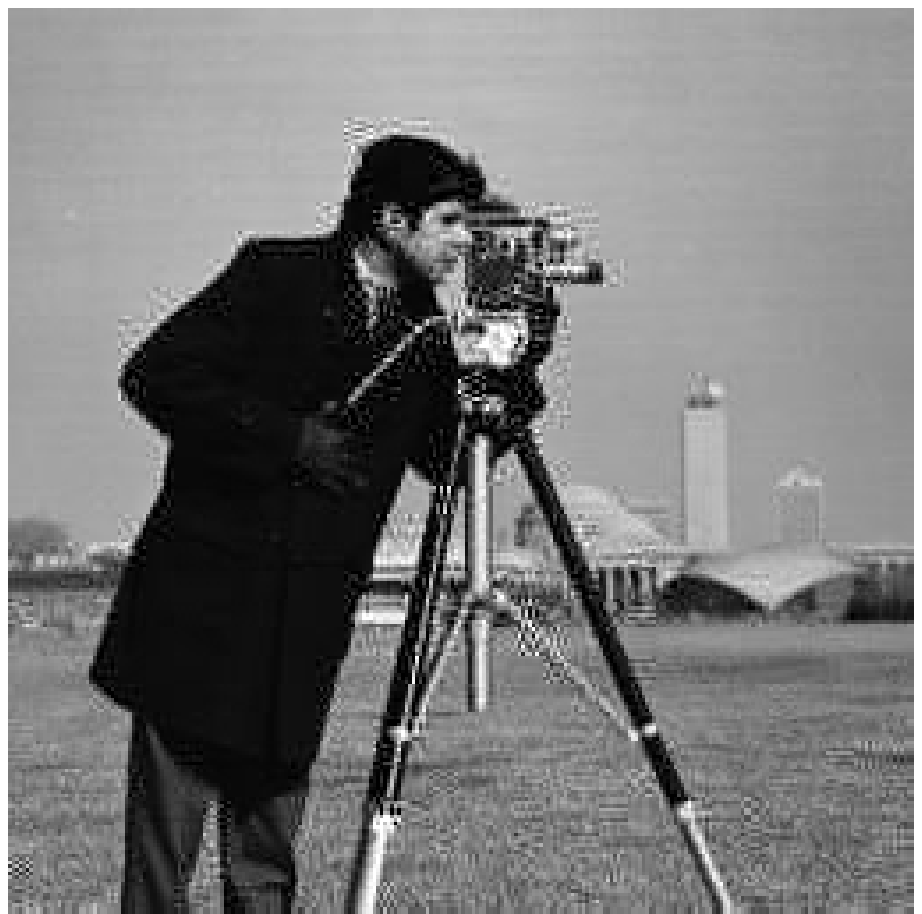}}\\ \subfloat[][]{\includegraphics[width=5.25cm, height=3.75cm]{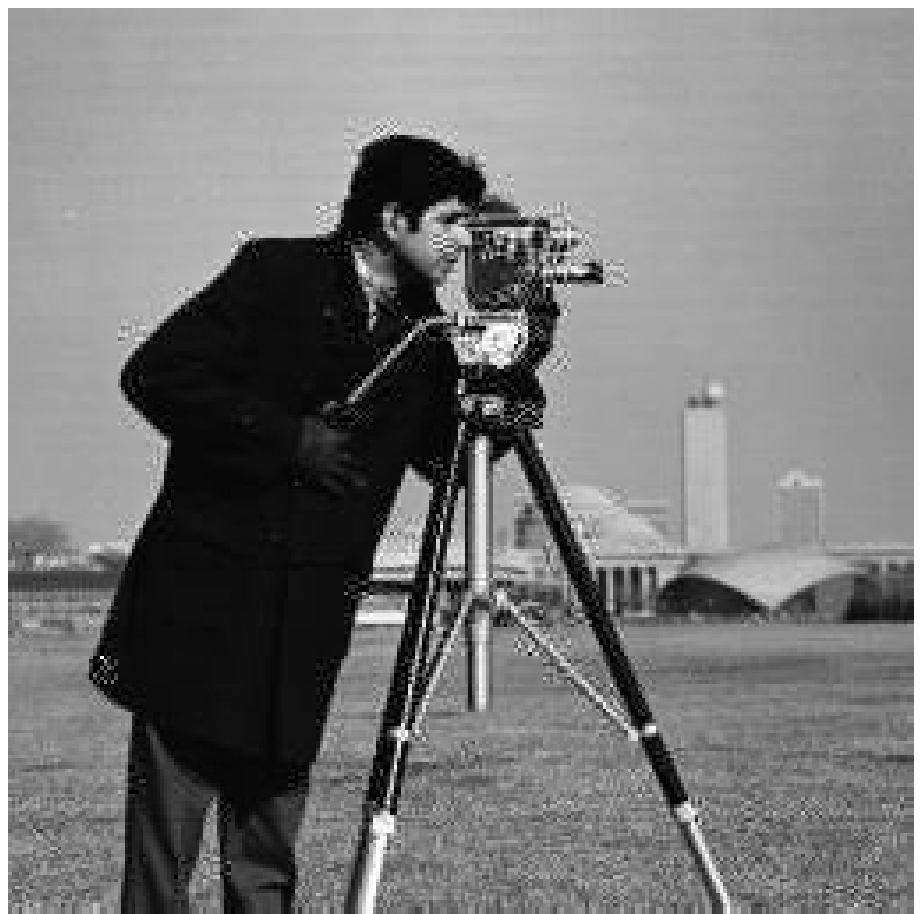}}\quad
\subfloat[][]{\includegraphics[width=5.25cm, height=3.75cm]{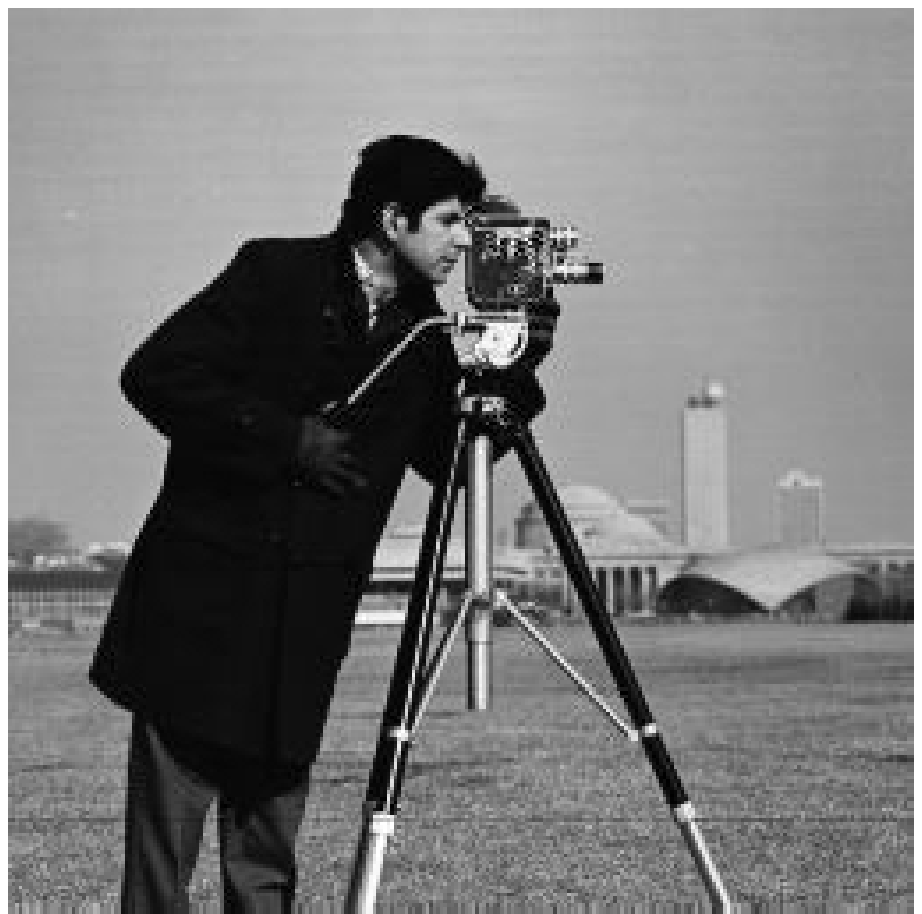}}
\caption{(a) Original image; Images measured using the EMS matrix and recovered using EMP algorithm with (b) M=25 (PSNR=25.5 dB), (c) M=35 (PSNR=27.8 dB), (d) M=55 (PSNR=37.6 dB).}
\label{reco_im1.2}
\end{figure*} 

\begin{figure*}[ht!]
\centering
\subfloat[][]{\includegraphics[width=5.25cm, height=3.75cm]{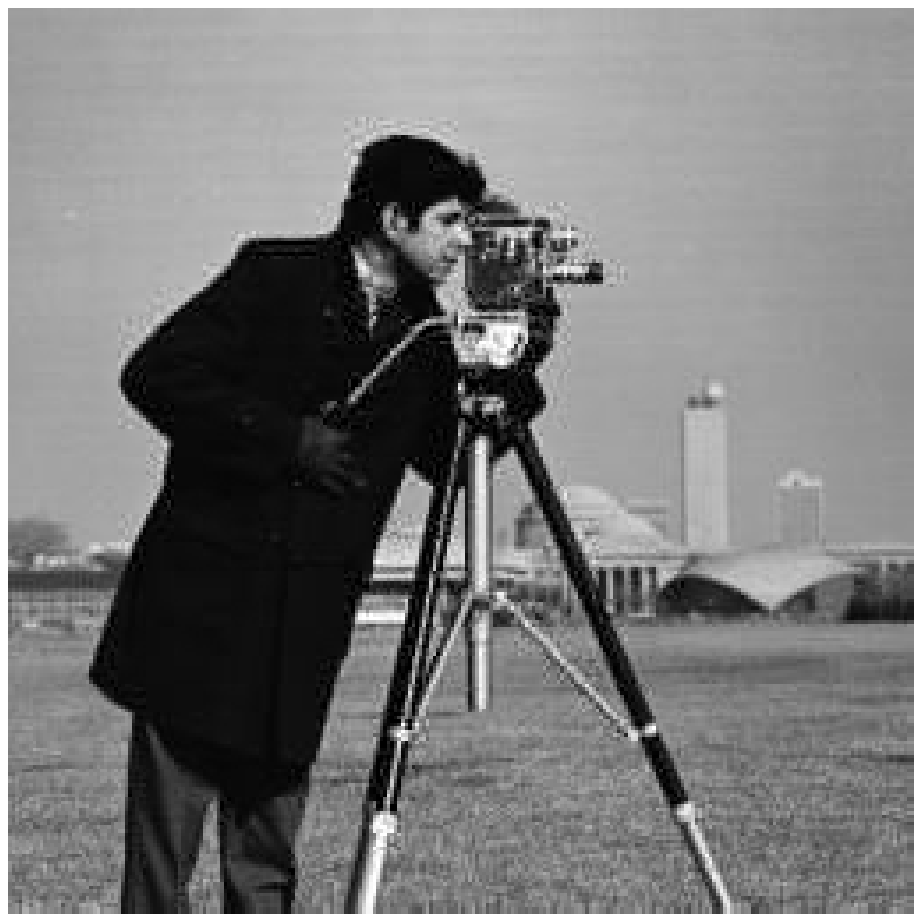}}\quad \subfloat[][]{\includegraphics[width=5.25cm, height=3.75cm]{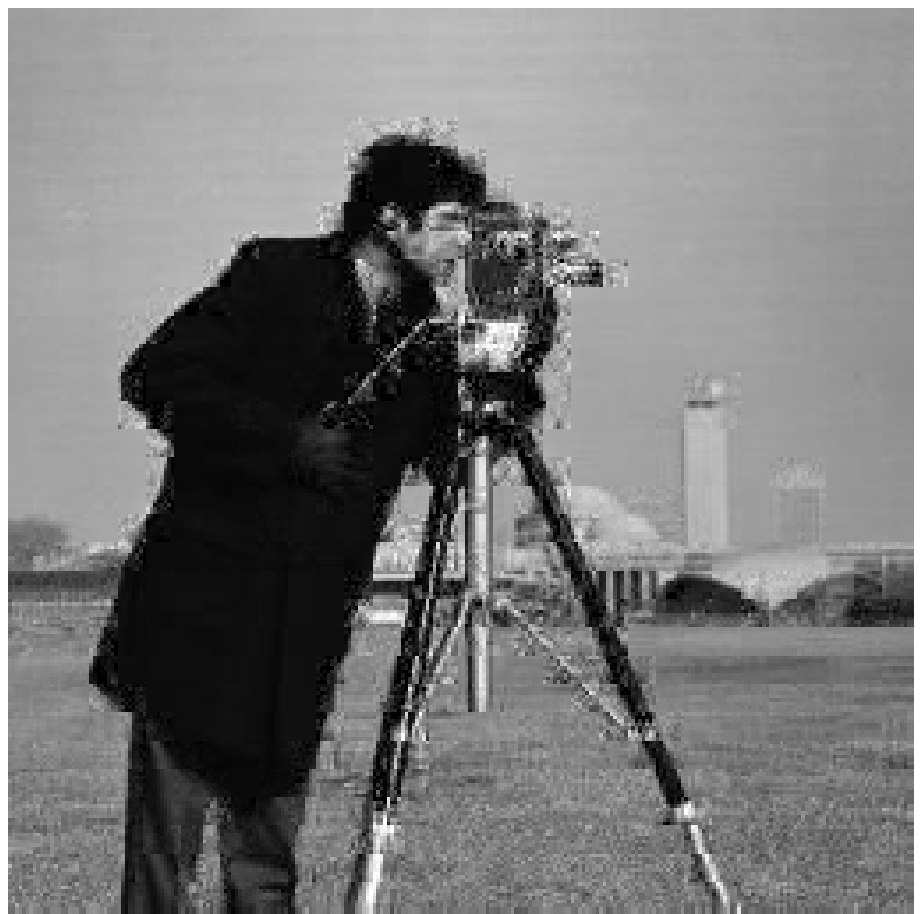}}\\
\subfloat[][]{\includegraphics[width=5.25cm, height=3.75cm]{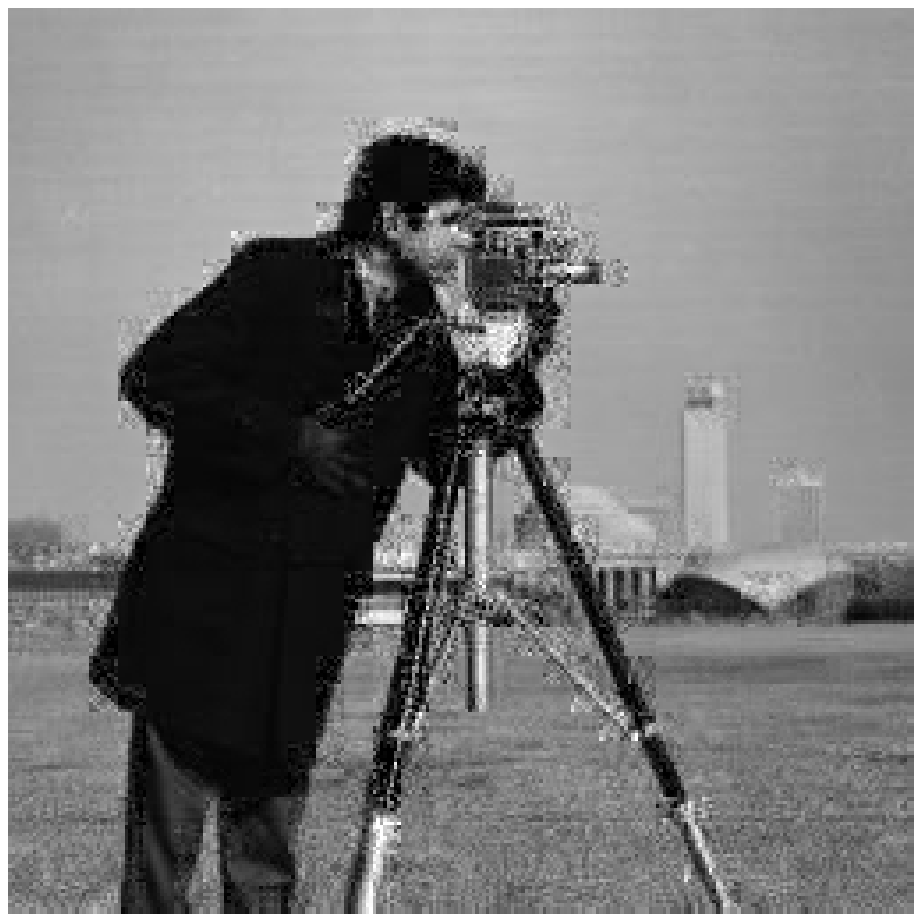}}\quad \subfloat[][]{\includegraphics[width=5.25cm, height=3.75cm]{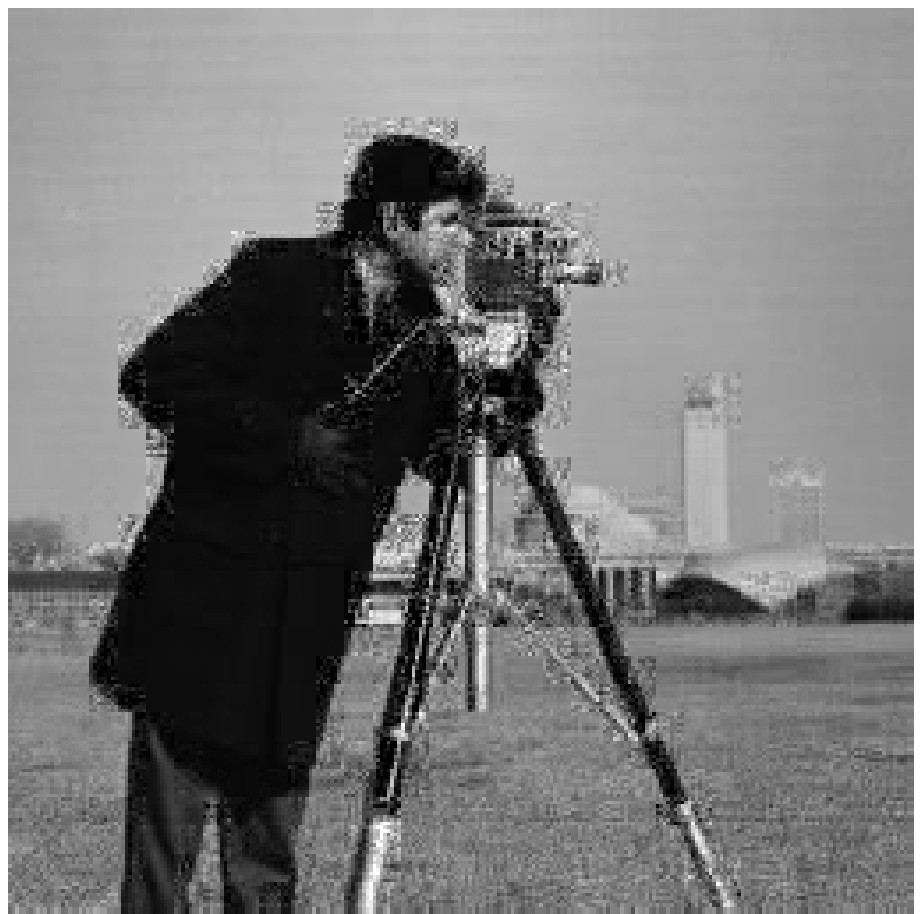}}
\caption{Images recovered using BP algorithm measured using matrices (with $M=30$) generated by algorithms (a) EMS (PSNR=29.3 dB) (b) Numax (PSNR=24.5 dB), (c) Elad (PSNR=24.2 dB) and (d) Random matrix (PSNR=23.1 dB).}
\label{reco_im2}
\end{figure*} 
\subsubsection{Image signals}
The 64-dimensional test signals for images were constructed by considering non-overlapping $8\times 8$ blocks from various images. The number of image blocks used for training were 7725. The sparsifying basis used was the DCT basis. The proposed algorithm was used to construct the measurement matrix for this class of image signals. The SRER plots in Fig. \ref{reco_er_im} show the measurement matrix constructed is able to recover signals with less number of measurements than required by other measurement matrices. 

To observe the perceptual quality of the signals, Figures \ref{reco_im1.1} and \ref{reco_im1.2}  show the images measured using the EMS matrix of various $M$ values (the $M$ value is the number of measurements in each $8\times 8$ block) and reconstructed using BP and EMP algorithms, respectively. Fig. \ref{reco_im1.1}(b)-(d) give the images reconstructed when the measurement rates ($M/N$) are 0.1, 0.2 and 0.3, respectively. The Peak Signal to Noise Ratio (PSNR) values show that the recovery error is small for measurement rate close to 0.3. Fig. \ref{reco_im2} gives a comparison of the reconstructed images when the images were measured using various measurement matrices with $M=30$, for each $8\times 8$ block, and BP was used as the recovery algorithm. It can be seen that the perceptual quality of the reconstructed image when the measurement was taken using the EMS matrix is better than that of the reconstructed images when the measurements were taken using other sensing matrices. 
\begin{figure*}[ht!]
\centering
\subfloat[][]{\includegraphics[width=5.05cm, height=4.25cm]{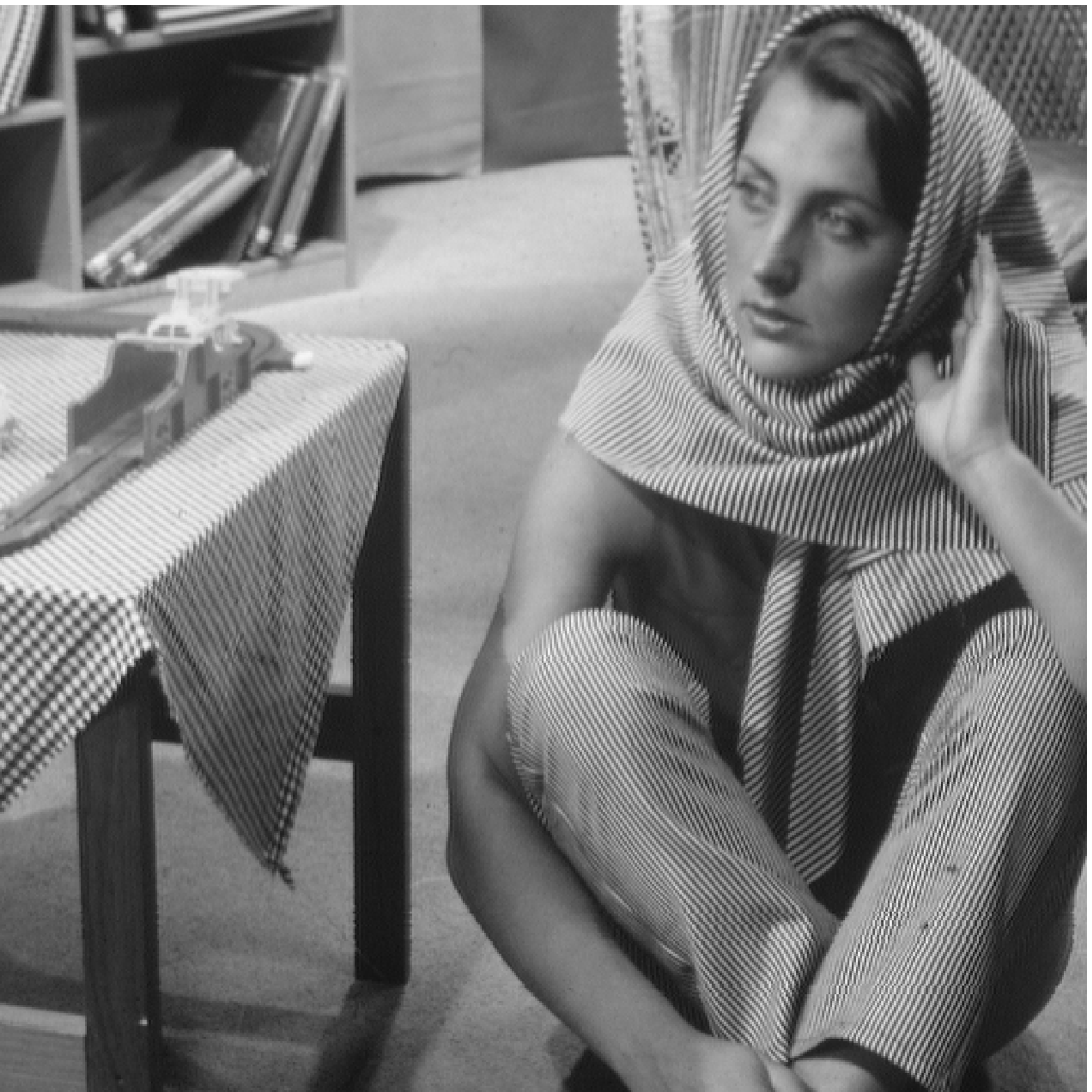}}\qquad\subfloat[][]{\includegraphics[width=5.05cm, height=4.25cm]{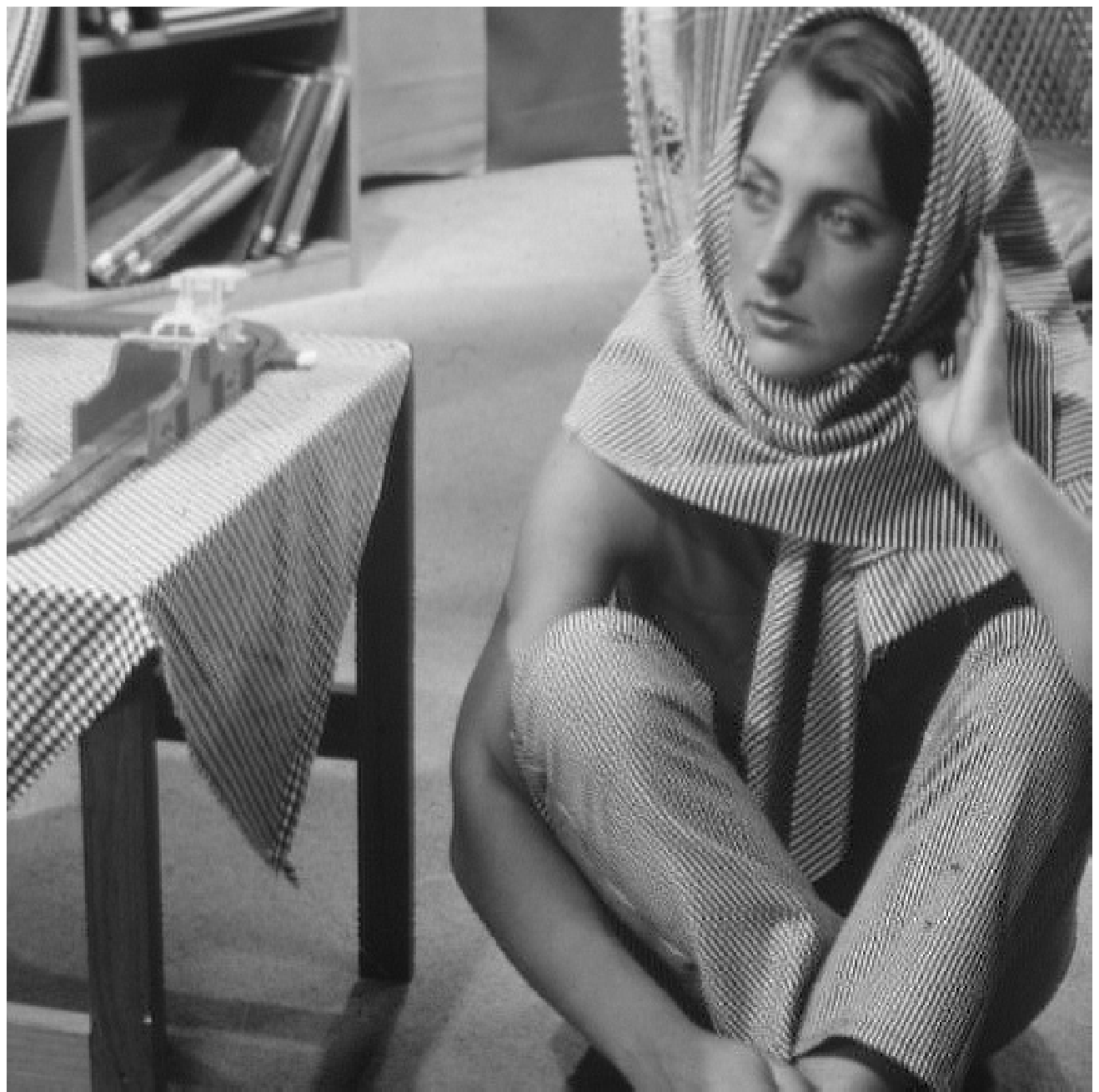}}\\ 
\subfloat[][]{\includegraphics[width=5.25cm, height=4.25cm]{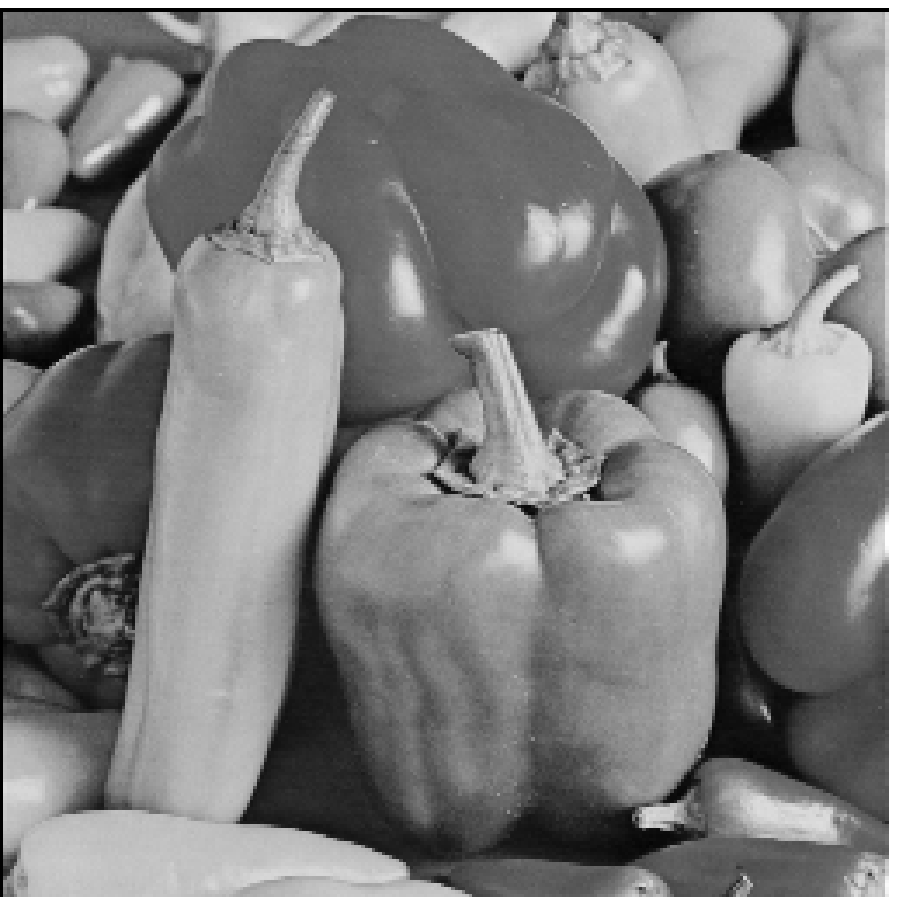}}\quad\subfloat[][]{\includegraphics[width=6.05cm, height=4.25cm]{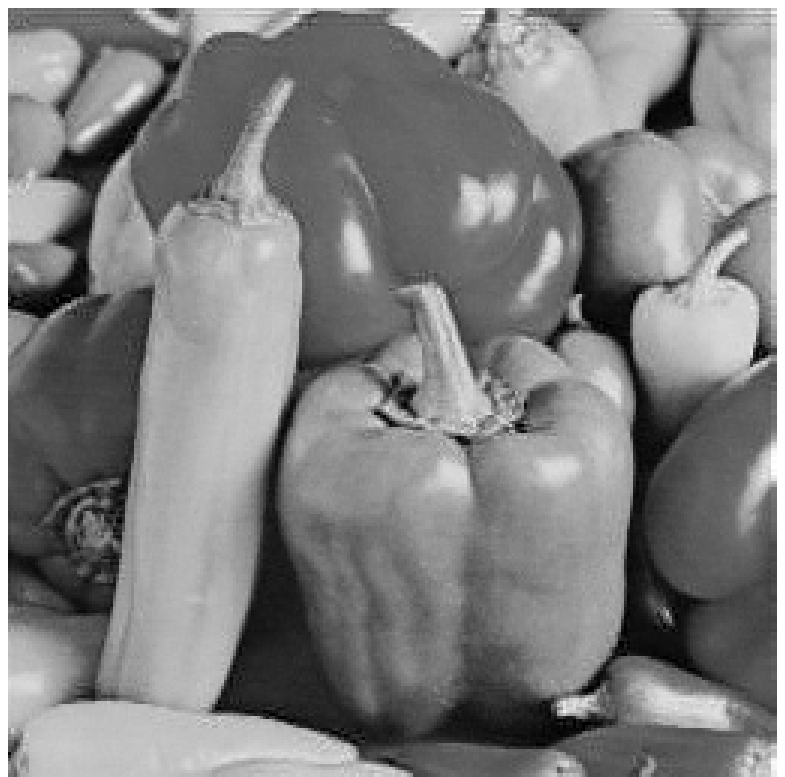}}
\caption{Images measured using the EMS matrix ($M=30$) and recovered using BP algorithm (a) Barbara original, (b) Barbara reconstructed (PSNR=29.36 dB), (c) Peppers original (d) Peppers reconstructed (PSNR=32.63 dB).}
\label{reco_im3}
\end{figure*} 

Fig. \ref{reco_im3} shows images measured using the EMS matrix with $M=30$, for each $8\times 8$ block, and reconstructed using the BP algorithm. The PSNR values show that the reconstructed images retain most of the information contained in the original image and the information loss is minimal. The figure establishes that the EMS matrix is capable of capturing maximum information from any image signal.
\begin{figure*}[ht!]
\centering
\subfloat[][]{\includegraphics[width=5.9cm, height=4.4cm]{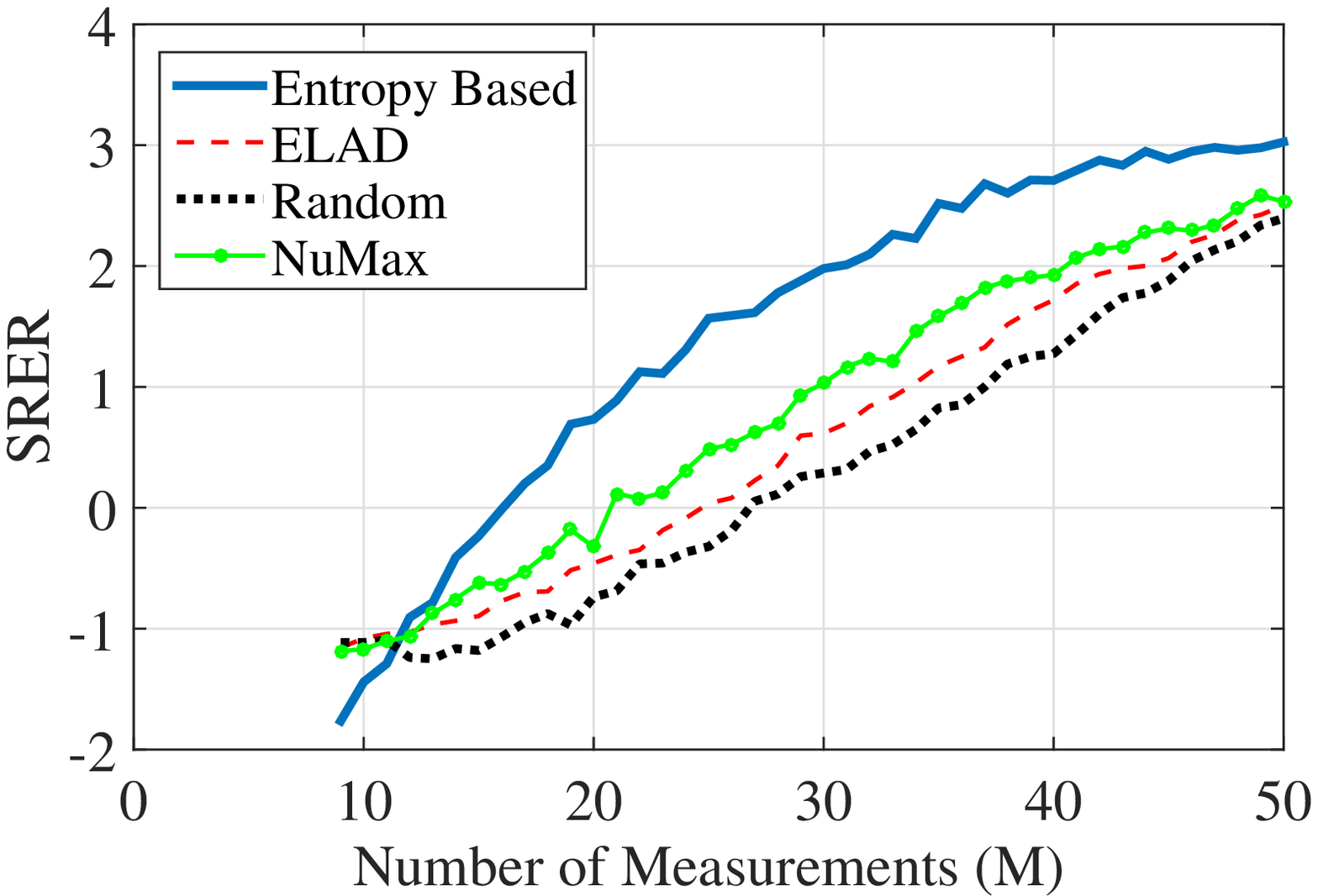}} \subfloat[][]{\includegraphics[width=5.9cm, height=4.4cm]{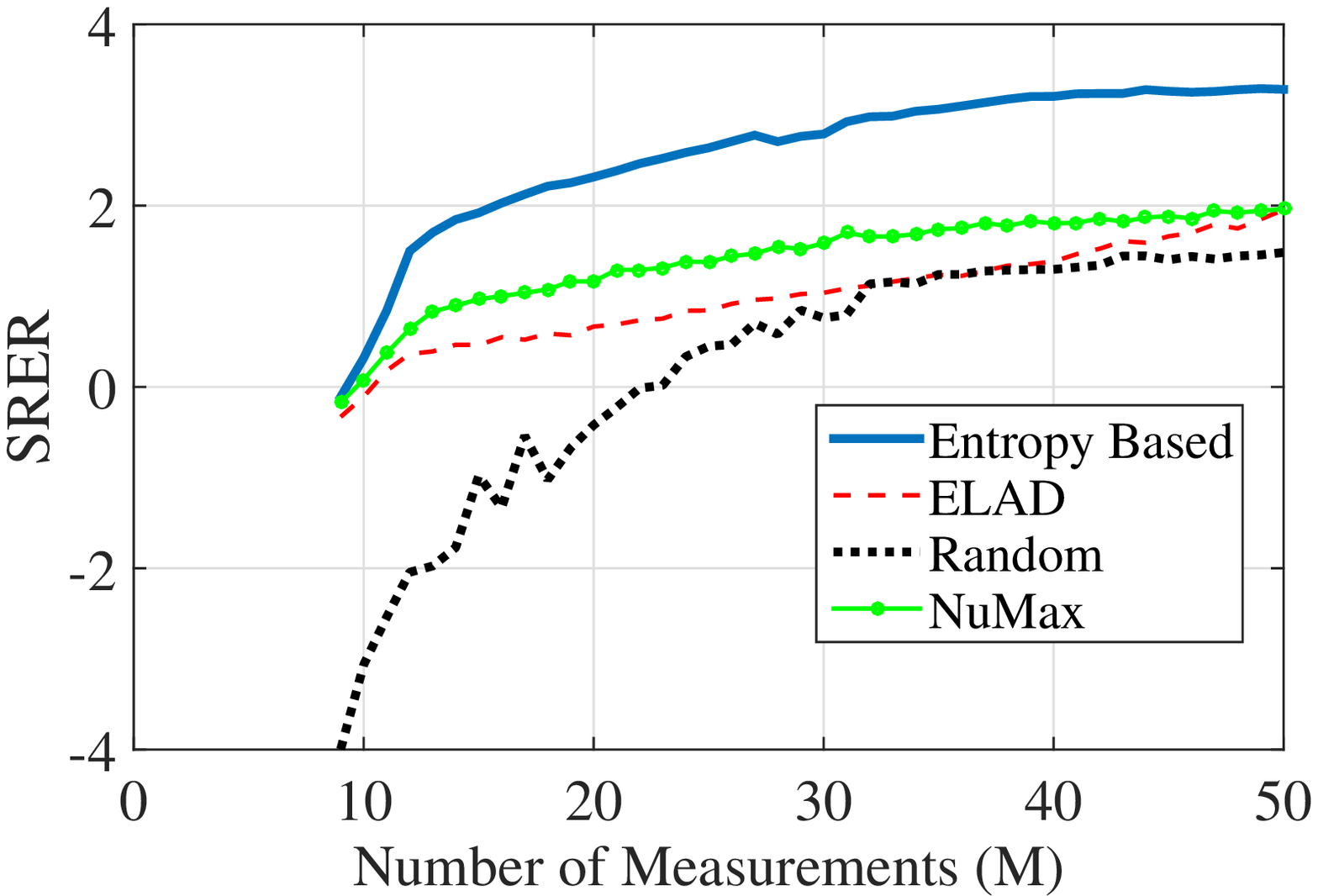}}
\subfloat[][]{\includegraphics[width=5.9cm, height=4.4cm]{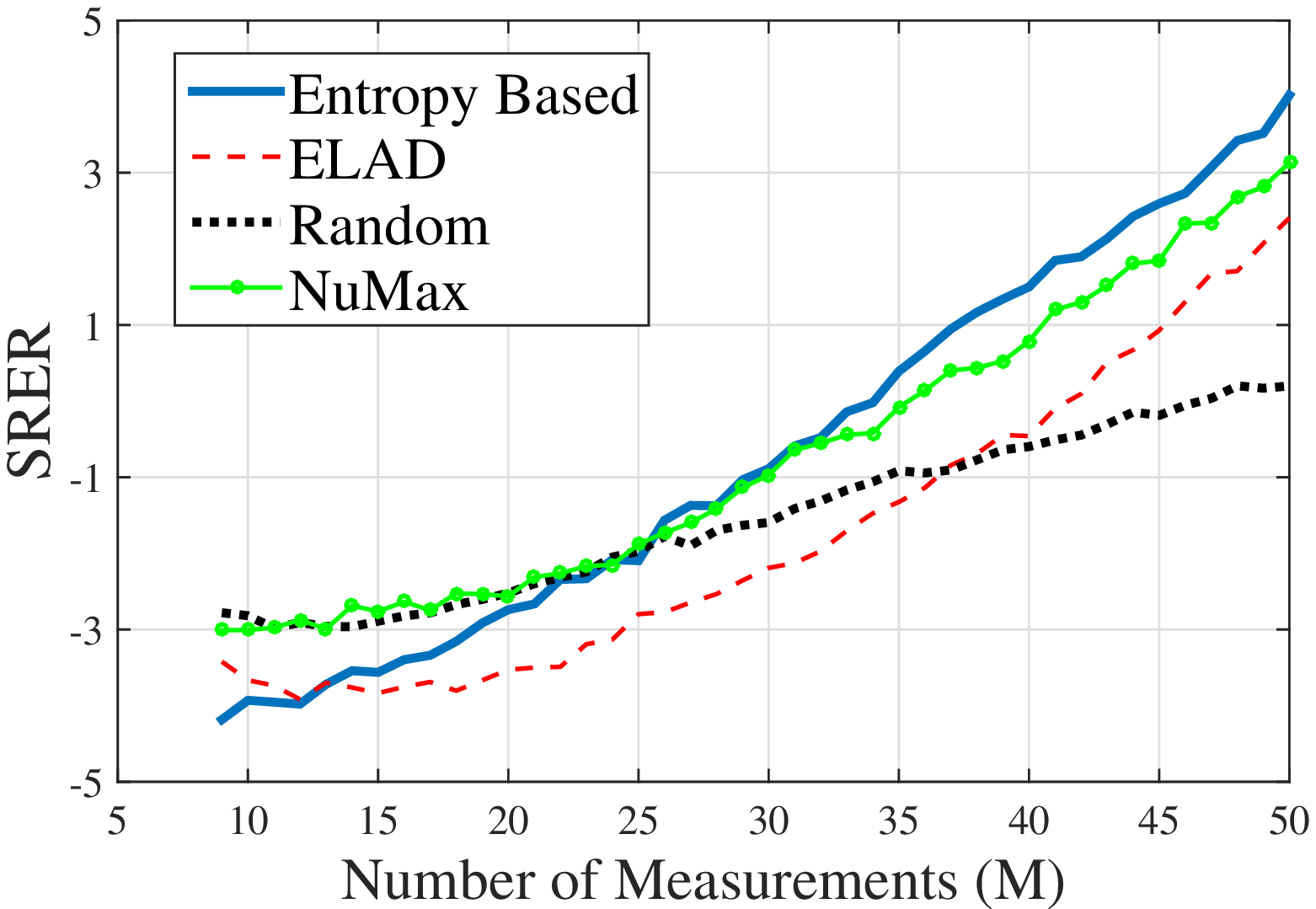}}
\caption{Average SRER (dB) of signals sparse in DCT basis with $K=10$ and input SNR 3dB and using recovery algorithms (a) BPDN, (b) EMP and (c) OMP.}
\label{noise_sparse_3db}
\end{figure*}
\subsection{Noisy Signals}
The measurement of noisy signal $\bm{x_e}=\bm{x}+\bm{e}$, where $\bm{e}$ is the additive white noise, can be expressed as
\begin{align}
\bm{y}&=\bm{\Phi x_e}=\bm{\Phi} (\bm{x}+\bm{e})\nonumber\\
&= \bm{\Phi \Psi}(\bm{c}+\bm{c_e})=\bm{A}(\bm{c}+\bm{c_e}).
\end{align}
where, $\bm{c}$ is the representation of the signal $\bm{x}$ in the basis $\bm{\Psi}$ and $\bm{c_e}$ is the representation of the noise $\bm{e}$ in the basis $\bm{\Psi}$. We know that the representation $\bm{c}$ of the signal $\bm{x}$ in $\bm{\Psi}$ is sparse but the representation $\bm{c_e}$ of the noise $\bm{e}$ is dense in $\bm{\Psi}$. The matrix $\bm{A}$ constructed using the proposed method captures maximum information from the sparse set of coefficients. Ideally, the measurement matrix $\bm{A}$ will thus capture only the noisy components having support in $\bm{\Lambda}$ and reject the noisy component having support in $\bm{\Lambda^c}$. Also, the measurement matrix is constructed such that the RIP is satisfied. The RIP ensures that the measurement matrix-recovery algorithm pair is stable. Hence, stable recovery is ensured when the EMS matrix is paired with any recovery algorithm.

To validate the performance in the presence of noise, signals contaminated with white Gaussian noise were measured using the sensing matrices and recovered using Basis Pursuit De-Noising (BPDN), EMP and OMP algorithms. The SRER plots  of signals sparse in DCT basis, with sparsity 10, for input SNR 3dB (Fig. \ref{noise_sparse_3db}) 
show that the performance of the EMS matrix paired with any recovery algorithm is stable. 

For noisy signals, the total error $(x-\hat{x})$ is contributed by the recovery error and the noise. The low SRER with less number of measurements is due to the high recovery error and the noise. As the number of measurements increases, the recovery error decreases and the major contribution to the total error is the noise. The noise reduction property of the recovery methods accounts for the marginal increase in the SRER of the recovered signals, above the input SNR, at large values of $M$. The noise reduction is more pronounced in OMP because the OMP algorithm was run for 10 iterations (since $K=10$), eliminating the noise component in the rest of the coefficients.

\section{Conclusion}
We have presented a learning method for constructing an efficient sensing matrix for the compressive sensing of a class of signals without assuming structured sparsity. The construction of the measurement matrix employs a learning scheme that maximizes the entropy of the measurement vectors of a set of training signals. We have established the bounds on the entropy of measurements necessary for the unique recovery of a signal. We have also proved that maximization of the entropy of measurements leads to a reduction in the number of measurements required for the unique recovery of signals. The sensing matrix designed was used for the compressive measurements of a class of sparse synthetic signals of arbitrary support, speech signals and image signals. The recovery of the signals is significantly better, with less number of measurements, than the recovery from the measurements obtained using other existing sensing matrices. 


\section*{Appendix}

\subsection*{A. Necessary condition for unique recovery is $M_eff\geq 2K$.}
Let $\bm{y}$ be the measurement vector corresponding to the sparse representation vector $\bm{c}$ of a compressible signal obtained as $\bm{y}=\bm{Ac}$, where the matrix $\bm{A}$ has orthogonal rows. Consider the case where $\bm{y}$ is strictly sparse with \mbox{$\Vert \bm{y} \Vert_0=M_{eff}<M$}. Here, $\bm{y}=\bm{Ac}=\tilde{\bm{A}}\bm{c}$, where $\tilde{\bm{A}}$ is formed by replacing the rows of $\bm{A}$ corresponding to the zero entries in $\bm{y}$ with rows of zeros (there would be $M-M_{eff}$ zero rows in $\tilde{\bm{A}}$). Thus, 
\begin{align*}
\text{rank}(\tilde{\bm{A}})=M_{eff}\\
\text{spark}(\tilde{\bm{A}})\in [2,M_{eff}+1].
\end{align*} 

According to the theory of CS, the condition for unique recovery is $\text{spark}(\tilde{\bm{A}}) > 2K$ [6]. 
\begin{align*}
&~~~~~~ 2K<\text{spark}(\tilde{\bm{A}})\leq M_{eff}+1\\
&\Rightarrow 2K<M_{eff}+1\\
&\Rightarrow 2K\leq M_{eff}.
\end{align*}
Thus we can say that the condition for unique recovery is $M_{eff}\geq 2K$. The argument can be extended to the case of a compressible $\bm{y}$ where $n_{th,y}^I=\lceil \exp(H(\bm{y}))\rceil=M_{eff}$. 

\subsection*{B. Solution to Orthogonal Procrustes Problem with rectangular matrix.}
If the singular value decomposition of $\bm{C}\widehat{\bm{Y}}^T$ is given by $\bm{U\Delta V}^T$, where $\bm{U}\in \mathbb{R}^{N\times N}$ and $\bm{V}\in \mathbb{R}^{M\times M}$, we propose that the solution to the problem (\ref{phi_update1}) is given by $\widehat{\bm{A}}=\bm{VU_M}^T$, where $U_M$ contains the first $M$ columns of $U$ which correspond to the largest $M$ singular values of $C\widehat{Y}^T$.
\begin{proof}
We know, $\widehat{\bm{Y}}\in \mathbb{R}^{M\times L}$, $\bm{A}\in \mathbb{R}^{M\times N}$ and $\bm{C} \in \mathbb{R}^{N\times L}$. Assuming $\bm{A}$ to have orthogonal rows we get
\begin{align*}
\Vert \widehat{\bm{Y}}-\bm{AC}\Vert_F^2 &= tr\lbrace (\widehat{\bm{Y}}-\bm{AC})(\widehat{\bm{Y}}^T - \bm{C}^T\bm{A}^T)\rbrace\\
&= tr(\widehat{\bm{Y}}\widehat{\bm{Y}}^T)-2tr(\widehat{\bm{Y}}\bm{C}^T\bm{A}^T)+tr(\bm{ACC}^T\bm{A}^T)
\end{align*}
where $tr(\bm{A})$ refers to trace of the matrix $\bm{A}$. Minimizing $\Vert \widehat{\bm{Y}}-\bm{AC}\Vert_F^2$ is equivalent to maximizing $tr(\widehat{\bm{Y}}\bm{C}^T\bm{A}^T)$
\begin{align*}
tr(\widehat{\bm{Y}}\bm{C}^T\bm{A}^T) &=tr(\bm{AC}\widehat{\bm{Y}}^T)\\
&=tr(\bm{AU\Delta V}^T)
\end{align*}
where $\bm{C}\widehat{\bm{Y}}^T=\bm{U\Delta V}^T$ and $\bm{U}\in \mathbb{R}^{N\times N}$, $\bm{\Delta} \in \mathbb{R}^{N\times M}$ and $\bm{V}\in \mathbb{R}^{M\times M}$. Since $\bm{C}\widehat{\bm{Y}}^T$ is a rectangular matrix with $M<N$, the $\bm{\Delta}$ matrix is diagonal with $N-M$ rows equal to zero. Thus, $\bm{U\Delta}=\bm{U_M \Delta_M}^T$, where $\bm{U_M}$ contains the first $M$ columns of $\bm{U}$ which correspond to the largest $M$ singular values of $\bm{C}\widehat{\bm{Y}}^T$, and $\bm{\Delta_M}^T$ contains the rows of $\bm{\Delta}$ containing the non-zero $M$ singular values of $\bm{C}\widehat{\bm{Y}}^T$ ($\bm{\Delta_M}^T$ is a square diagonal matrix). 
\begin{align*}
tr(\bm{AU\Delta V}^T) &= tr(\bm{AU_M\Delta_M}^T \bm{V}^T)\\
&= tr(\bm{V}^T \bm{AU_M\Delta_M}^T)
\end{align*}

This trace will be maximum when $\bm{V}^T \bm{AU_M}=\bm{I}$ \cite{33}. Hence, $\bm{A}= \bm{V U_M}^T$ with orthogonal rows.
\end{proof}

\end{document}